\newtheorem{theorem}{Theorem}
\newenvironment{proof}{\noindent \textbf{{Proof~} }}{\hfill $\blacksquare$}
\begin{document}
	
	\title{Learning quantum phases via single-qubit disentanglement}
	
	\author{Zheng An}
	
	\affiliation{Institute of Physics, Beijing National Laboratory for Condensed
		Matter Physics, Chinese Academy of Sciences, Beijing 100190, China}
	
	\affiliation{School of Physical Sciences, University of Chinese Academy of
		Sciences, Beijing 100049, China}
	
	\affiliation{Department of Physics, The Hong Kong University of Science and
		Technology, Clear Water Bay, Kowloon, Hong Kong, China}
  
	\orcid{0000-0003-4954-2671}
 
	\author{Chenfeng Cao}
	
	\affiliation{Department of Physics, The Hong Kong University of Science and
		Technology, Clear Water Bay, Kowloon, Hong Kong, China}

        \orcid{0000-0001-5589-7503}
        
	\author{Cheng-Qian Xu}
	
	\affiliation{Institute of Physics, Beijing National Laboratory for Condensed
		Matter Physics, Chinese Academy of Sciences, Beijing 100190, China}
	
	\affiliation{School of Physical Sciences, University of Chinese Academy of
		Sciences, Beijing 100049, China}
  
	\orcid{0000-0003-2821-1216}
 
	\author{D. L. Zhou} \email[]{zhoudl72@iphy.ac.cn}
	
	\affiliation{Institute of Physics, Beijing National Laboratory for Condensed
		Matter Physics, Chinese Academy of Sciences, Beijing 100190, China}
	
	\affiliation{School of Physical Sciences, University of Chinese Academy of
		Sciences, Beijing 100049, China} 
	
	\date{\today}

        \maketitle
	
	\begin{abstract}
		Identifying phases of matter presents considerable challenges, particularly within the domain of quantum theory, where the complexity of ground states appears to increase exponentially with system size. Quantum many-body systems exhibit an array of complex entanglement structures spanning distinct phases. Although extensive research has explored the relationship between quantum phase transitions and quantum entanglement, establishing a direct, pragmatic connection between them remains a critical challenge. In this work, we present a novel and efficient quantum phase transition classifier, utilizing disentanglement with reinforcement learning-optimized variational quantum circuits. We demonstrate the effectiveness of this method on quantum phase transitions in the transverse field Ising model (TFIM) and the XXZ model. Moreover, we observe the algorithm's ability to learn the Kramers-Wannier duality pertaining to entanglement structures in the TFIM. Our approach not only identifies phase transitions based on the performance of the disentangling circuits but also exhibits impressive scalability, facilitating its application in larger and more complex quantum systems. This study sheds light on the characterization of quantum phases through the entanglement structures inherent in quantum many-body systems.
	\end{abstract}
	
	\section{Introduction}
	A phase diagram illustrates qualitative variations in many-body systems as a function of physical system parameters~\cite{sachdev_2011}. The purpose of a phase classifier is to ascertain the phase diagram by examining measurement results of observables. In the Landau theory of phase transitions, for instance, only the average measurement value of a single local variable, known as the local order parameter, is necessary to determine the phase diagram. Indeed, a discontinuity in a local order parameter or one of its derivatives signifies a phase transition.
	
	In addition to order parameters, entanglement, an essential quantum mechanical nonlocal property, has been proposed to  characterize quantum phase transitions and to identify topological orders~\cite{PhysRevB.82.155138,PhysRevA.66.032110,PhysRevLett.90.227902,Osterloh2002,PhysRevLett.96.110404,PhysRevLett.96.110405}. Entanglement entropy is  powerful for detecting critical points and studying universal properties in quantum phase transitions, especially in one-dimensional systems. At critical points, entanglement entropy exhibits logarithmic scaling with subsystem size, revealing crucial information about the underlying critical system~\cite{PhysRevLett.90.227902,Osterloh2002}. Topological entanglement entropy characterizes topologically ordered states, capturing non-local entanglement and serving as an effective diagnostic tool for detecting and characterizing these exotic phases~\cite{PhysRevB.82.155138,PhysRevLett.96.110404}. Quantum entanglement is believed to be crucial for understanding quantum phase transitions~\cite{Osterloh2002,RevModPhys.80.517,PhysRevLett.93.250404}, with distinct entanglement structures characterizing different quantum phases.

 \begin{figure}[tbhp]
		\centering
		\includegraphics[width=8.5cm]{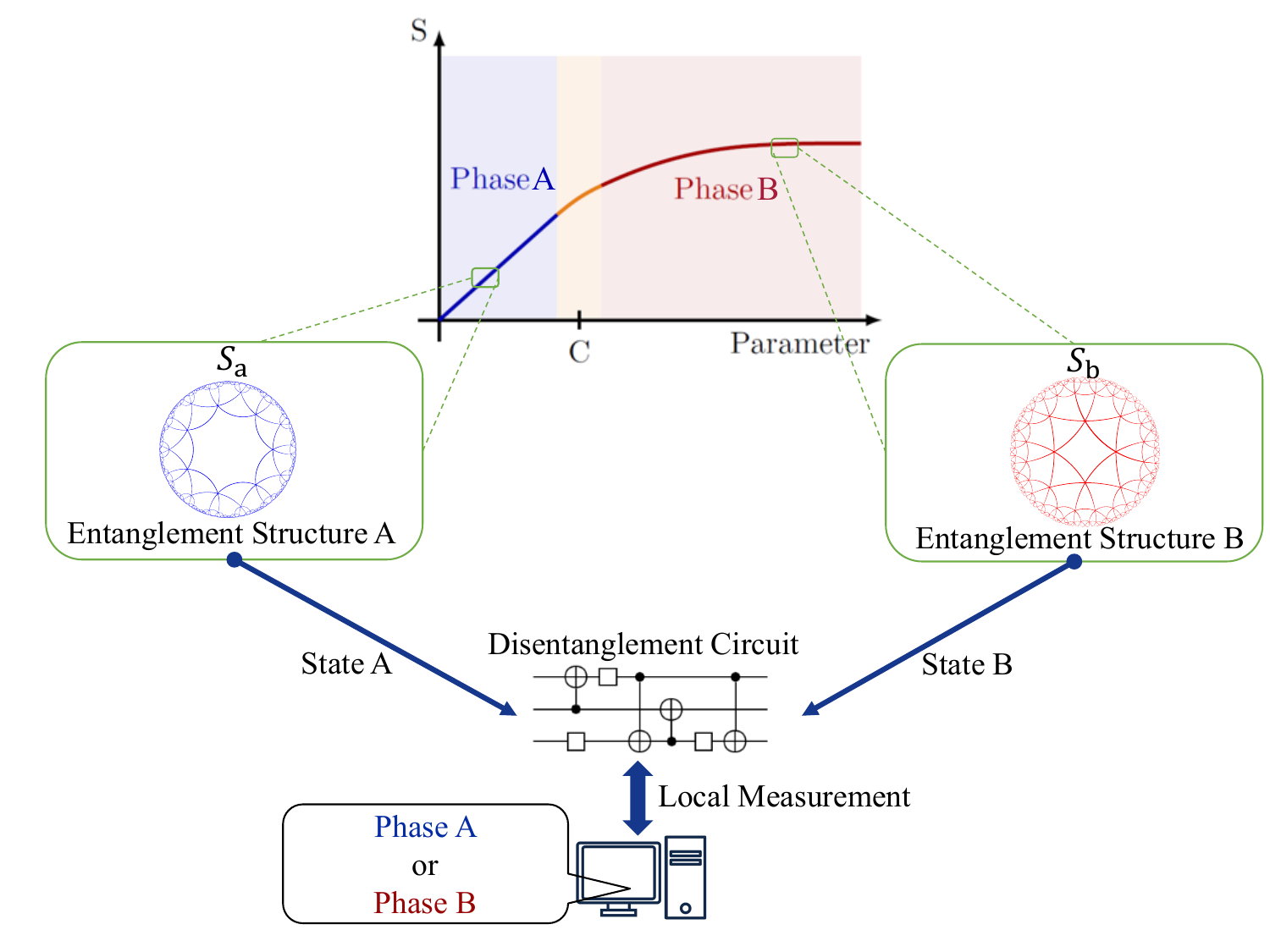}
		\caption{\label{fig:prob} Classifying quantum phases through single-qubit disentanglement. The local entanglement of a quantum many-body system's ground state may encompass intricate structures, with some of the primary structures potentially being closely related to the quantum phase. The RL agent's objective is to reduce the local entanglement of the ground state within a given quantum phase by employing limited operations and observations. Consequently, we can deduce the predominant entanglement structure present in the local entanglement, facilitating the classification of the quantum phase.}
	\end{figure}
 
	In our examination of quantum entanglement, we discovered a strong correlation between disentanglement and entanglement structure. Disentanglement is a process that transforms an initially entangled state of a composite system into a separable state. However, a universal disentangling machine is non-existent~\cite{PhysRevA.59.3320,PhysRevA.60.4341,PhysRevLett.83.1451}. Despite this, by relaxing the constraint that the reduced density matrices of subsystems remain perfectly unaffected, local operations can achieve approximate disentanglement~\cite{PhysRevA.61.052301,PhysRevA.61.032108,Yu2004}. Disentanglement's central concept is to decouple quantum correlations among subsystems. Consequently, we aim to investigate whether a quantum circuit-implemented unitary transformation can eradicate local entanglement within a quantum state. Moreover, we seek to explore the relationship between entanglement structure and circuit structure, particularly for states within the same phase.
	
	Reinforcement learning (RL) is a branch of machine learning that focuses on training agents to make optimal decisions in complex environments by interacting with these environments and receiving feedback in the form of rewards or penalties~\cite{sutton}. Over the years, RL has been successfully applied to a wide range of applications, from video games~\cite{atari,vinyals2019grandmaster} and board games~\cite{alphago,alphagozero,alphazero} to neural architecture search~\cite{DBLP:conf/iclr/ZophL17}. In recent years, RL has emerged as a powerful tool in quantum information science, where it has been applied to various quantum-related problems including quantum control~\cite{Bukov2017,gate,ladder}, quantum error correction~\cite{error}, and quantum circuit optimization~\cite{fosel2021quantum,PS2021,cao2022quantum}. 
	
	Variational quantum circuits have become instrumental in approximating both known and unknown unitary evolutions, as well as for studying ground state properties, as demonstrated by hybrid quantum-classical algorithms on noisy intermediate-scale quantum devices~\cite{variational2014,preskill2018quantum, cerezo2020variational2, bharti2021noisy, Cao_2023, cao2024exploiting}. These circuits have also been used to characterize quantum phase transitions, applying established order parameters~\cite{PhysRevA.104.062614, PhysRevX.12.041035, lively2024, Lukas_Bosse_2024} or identifying unknown ones through machine learning~\cite{Cao:2024kea}. In this work, we explore from the disentanglement perspective and introduce an reinforcement learning approach to classify quantum phase transitions. Our method employs RL algorithm~\cite{atari} to design and optimize variational quantum circuits, which efficiently disentangle the ground state of quantum systems (Fig.~\ref{fig:prob}). By assessing the performance of these disentangling circuits over a range of system parameters, we can identify the critical point that demarcates distinct quantum phases. For practical purposes, the circuit strives to disentangle a single qubit from the rest of the subsystems within a constrained operation space.
	
	We demonstrate the effectiveness of our approach on two paradigmatic models of quantum magnetism: the transverse field Ising model (TFIM) and the XXZ model. For both models, our method accurately identifies the critical point and phase transition. Moreover, we find similarities in the disentangling circuit structures for states within the same phase, shedding light on the relation between entanglement structure and circuit design. Our approach not only classifies quantum phases but also provides insight into their inherent entanglement properties. We observe the algorithm's ability to learn the Kramers-Wannier duality related to entanglement structures in the TFIM. Compared to existing methods like the variational quantum classifier~\cite{PhysRevA.102.012415}, our approach requires significantly fewer resources, using only local measurements and a few layers of variational quantum circuits. We also show that disentangling circuits trained on small system sizes can be applied to much larger systems, exhibiting the scalability of our method.
	
	This paper is organized as follows: Section~\ref{sec:method} provides a succinct introduction to the identification of quantum phase transitions through disentanglement and delves into the reinforcement learning framework underpinning our approach. Section~\ref{sec:result} showcases the numerical results of our method, highlighting its ability to classify phases of the TFIM and XXZ models while illustrating the properties of the derived disentangling circuits. In Section~\ref{sec:comp}, we draw comparisons between our method and the variational quantum classifier. Section~\ref{sec:scal} examines the scalability of our approach. Section~\ref{dis} discusses the scope of our work, its limitations, and potential extensions to its application. Finally, Section~\ref{Conclusion} offers concluding remarks and explores prospective avenues for future research in this domain.
	
	\section{Methdology}\label{sec:method}
	
	\subsection{Phase transitions via disentanglement} 
	In the study of quantum phase transitions, disentanglement plays a crucial role in understanding the intricate nature of entangled states and their transformations across critical points. These transitions, characterized by non-analytic changes in ground state properties, are often accompanied by alterations in entanglement structures. Designing disentangling circuits poses a significant challenge, as it involves creating a specific unitary transformation as a finite sequence of unitary operators $A_t$ chosen from a universal set of gates $\mathcal{G}$. In this study, we task the agent with developing a disentanglement unitary operation $U$ that can effectively eliminate a target local entanglement entropy. Consequently, the agent's objective is to identify a unitary gate $U$, derived from the composition of gates in the sequence, which eradicates the target entanglement entropy.
	
	We employ entanglement entropy $S$ to quantify the entanglement present in quantum systems' ground states. When a bipartite quantum system $AB$ exists in a pure state, entanglement entropy can be calculated using the reduced density matrices $\rho_{A}$ or $\rho_{B}$, as shown in the following formula:
	
	\begin{equation}
		S \equiv -\operatorname{tr}\left(\rho_{A} \log_ {2} \rho_{A}\right) = -\operatorname{tr}\left(\rho_{B} \log_ {2} \rho_{B}\right).
	\end{equation}
	
	For subsystems $A$ or $B$ represented as spin-$\frac{1}{2}$ systems, $S$ ranges from $S=0$ (product state) to $S=1$ (maximally entangled state). For the ground state of the two investigated models, we consider one site as subsystem $A$ and the remainder as subsystem $B$.
	
	In this work, we introduce a training scheme for identifying distinct phases and their critical point. For a family of ground states $\rho_{\alpha}$, the primary concept involves training a reinforcement learning (RL) agent on two states $\rho_{\alpha=a}$ and $\rho_{\alpha=b}$, while presuming the existence of an undetermined critical point $a < c < b$ that categorizes the states into two groups. The aim is to pinpoint the accurate critical point $c$ by training the RL agent to achieve disentangling circuits until convergence with circuit layer $p$.
	
	We proceed to evaluate the performance of these disentangling circuits across the parameter range $(a, b)$, considering the total performance ($S_a$ and $S_b$) in relation to the proposed critical point $c$ as $S_a(c)$ and $S_b(c)$. We will demonstrate that the performance of the two RL-ansatz disentangling circuits with deep layers is inversely related, with the cross point at the precise critical point $c$, i.e., $S_a(c) = S_b(c)$.
	
	The circuit performance can be comprehended as follows: we assume that entanglement possesses two distinct structures in regimes below $c$ (phase $\mathcal{A}$) and above $c$ (phase $\mathcal{B}$), and that the circuits can differentiate and eliminate them. We refer to these diverse structures as features. When setting the states in the parameter range $(a, c')$ where $c' < c$, the circuit identifies the feature of phase $\mathcal{A}$ and accurately removes those entanglements. Nonetheless, when in the range $(c', c)$ of phase $\mathcal{A}$, the circuit encounters states with different features, but the majority feature persists from $a$ to $c'$. The $S_a$ will increase, signifying that some entanglements cannot be eradicated. When in the range $(c, b)$, the majority feature shifts, and $S_a$ swiftly elevates until saturation. The behavior of $S_b$ is precisely the reverse of $S_a$.
	
	If phases dictate entanglement structures, the feature count for both phases will be equal when situated at the critical point $c$, i.e., $S_a(c) = S_b(c)$. Conversely, if $S_a(c) \neq S_b(c)$, then the intersection cannot represent the critical point $c$, as the phase spanning from the cross point to the critical point would constitute another phase.
	
	\subsection{Reinforcement learning and variational quantum disentangler}
	\begin{figure*}[!htbp]
		\centering
		\includegraphics[width=16.5cm]{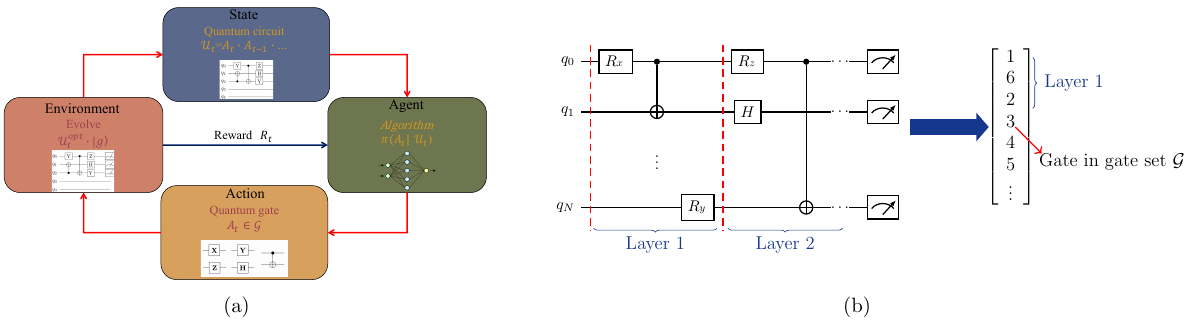}
		\caption{The deep reinforcement learning (DRL) architecture consists of two primary components: (a) The DRL environment, which is depicted as a quantum circuit modeled by the optimized variational quantum circuit $U^{opt}_t$ with an initial ground state $|g\rangle$ at each episode. During each time step $t$, the agent acquires the current state of $U_t$ and, based on this information, selects the subsequent gate $A_t$ from the gate set $\mathcal{G}$ to apply to the quantum circuit. Consequently, the environment returns the real-valued reward $R_t$ to the agent, which is a function of the entanglement entropy $S$. The agent's policy $\pi$ is encoded within a deep neural network. (b) State representation for the RL agent: At each time step, the agent receives an input vector representing the architecture of the quantum circuit $U_t$. This circuit is depicted as an array, with each element symbolizing a quantum gate in the given gate set. A layer of a quantum circuit is defined as a collection of exactly one quantum gate acting on each qubit within the selected range. Two-qubit gates (CNOTs) are only counted for the target qubit and so each qubit in the selected range is either acted upon by a single-qubit gate or is the target qubit of a CNOT.}\label{fig:flow}
	\end{figure*}
	
	Reinforcement learning (RL) is a machine learning paradigm in which an intelligent agent learns to make decisions by interacting with an environment (see Fig.~\ref{fig:flow}). The agent performs actions within the environment and receives feedback in the form of rewards and state (Representation of quantum circuit), which are used to adjust the agent's behavior. In this process, the agent seeks to maximize the reward per step. The agent's knowledge about the environment and its decision-making strategy, referred to as its policy, are typically represented by a mathematical function or a deep neural network (deep reinforcement learning). The policy dictates the agent's choice of actions based on the current state of the environment. By optimizing its policy to maximize the accumulated rewards, the agent learns to navigate the complex dynamics of the environment and make strategic decisions. In the context of our paper, reinforcement learning enables agents to efficiently select and apply quantum gates to optimize entanglement properties, furthering our understanding of quantum systems.
	
	In our framework, the environment's state consists of a quantum circuit initialized as an identity at the beginning of each episode. The agent incrementally constructs the circuit at every step by selecting a gate from $\mathcal{G}$ according to the policy $\pi$ encoded in the deep neural network, as depicted in Fig.~\ref{fig:flow} (a).
	
	We consider a gate set $\mathcal{G}$ that includes single-qubit rotations (along the x, y, or z-directions), Hadamard, and controlled-NOT (CNOT) gates, which together form a universal gate set. Rotation gates are parameterized by continuous variables, while CNOT and Hadamard gates are fixed. All single-qubit rotation gate parameters are initialized to $\pi$. It is important to note that the operation space is limited, acting only on the nearest or next-nearest qubits of the target qubit.
	
	At each time step, the reinforcement learning agent is provided with an input vector that encapsulates the architecture of the quantum circuit $U_t$. This representation employs an array $s_t$, wherein each element corresponds to a quantum gate belonging to the predetermined gate set (see Fig.~\ref{fig:flow} (b)).  
	
	The agent's reward in each step is given by:
	\begin{equation}
		R_{t}=\begin{cases}
			0, & t\in\{0,1,\dots,N-1\}\\
			\mathcal{R}, & t=N
		\end{cases}
		\label{reward}
	\end{equation}
	Here, $\mathcal{R}$ represents the modified reward function for our learning task. The agent does not receive an immediate reward, but instead at time $N$. A quantum circuit grants the last step's reward with optimized parameters. After finalizing the quantum circuit's structure, a classical optimizer is employed to adjust the initial parameters of the parameterized rotation gate to those with the minimum entanglement entropy.
	
	We define $\mathcal{R}$ as:
    \begin{equation}
        \mathcal{R} = 
\begin{cases}
0, & \text{if } S_{RL} \geq S_0 \\
\dfrac{S_0 - S_{RL}}{S_0}, & \text{otherwise}
\end{cases}
\label{eq:reward2}
    \end{equation}
	where $S_{RL}$ denotes the final entanglement entropy achieved by our RL-assisted quantum algorithm, and $S_{0}$ represents the ground-state entanglement entropy provided by the physical system.
	
	We employ Deep Q-learning algorithms to train the agents in this work (details of the algorithm are in supplementary materials), relying on the reward function. For the classical optimizer, we utilize the BFGS algorithm~\cite{broyden1970convergence,fletcher1970new,goldfarb1970family,shanno1970conditioning} implemented in SciPy~\cite{Virtanen2020}, enabling us to determine the parameters for each gate, from $U_t$ to $U^{opt}_t$.
	
	\section{Numerical Results}\label{sec:result}
	
	\subsection{Transverse field Ising model} 
	We first apply our method to
	one-dimensional transverse-field Ising model. The Hamiltonian for the Ising
	model on a 1D lattice with N sites in a transverse field is given by
	\begin{equation}
		H=-\sum_{j=0}^{N-1}\left(\lambda \sigma_{j}^{x} \sigma_{j+1}^{x}+\sigma_{j}^{z}\right),
	\end{equation}
	where $\sigma_{j}^{a}$ is the $a$ component of Pauli matrix ($a=x,y$ or $z$)
	at site $j$, and $\lambda$ is the inverse strength of the external field. We
	assume the periodic boundary conditions, i.e., the $N$-th site is identified
	with the $0$-th site. As is well known, there is a quantum phase transition
	at the critical point $\lambda=1$ for the transverse Ising model, which means the
	entanglement properties of the ground state change dramatically when
	crossing this critical point.
	
	When $\lambda$ approaches zero, the TFIM ground state becomes a product of spins
	pointing in the positive $z$ direction,
	\begin{equation}
		|\psi_{\text{TFIM}}\rangle_{\lambda \rightarrow 0} = |0\rangle^{\otimes N},
	\end{equation}
	where $|0\rangle$ and $|1\rangle$ denote the single spin state with spin up and with
	spin down along $z$-direction respectively.
	
	In the $\lambda \rightarrow \infty$ limit, the ground state is an
	adiabatic continuation of the GHZ state~\cite{PhysRevB.91.125121}:
	\begin{equation}
		|\psi_{\text{TFIM}}\rangle_{\lambda \rightarrow \infty} = \frac{1}{\sqrt{2}} ( |+ \rangle^{\otimes N} +|-\rangle^{\otimes N}),
	\end{equation}
	where $|+\rangle$ and $|-\rangle$ denote the single spin state with spin up and with
	spin down along $x$-direction respectively.
	
	Under the above two limits, the entanglement entropy of the ground state
	behaves as
	\begin{align}
		\label{eq:2}
		S_{\lambda\rightarrow 0} & = 0, \\
		S_{\lambda\rightarrow \infty} & = 1.
	\end{align}

    \begin{figure}[tbh]
		\centering
		\includegraphics[width=0.85\columnwidth]{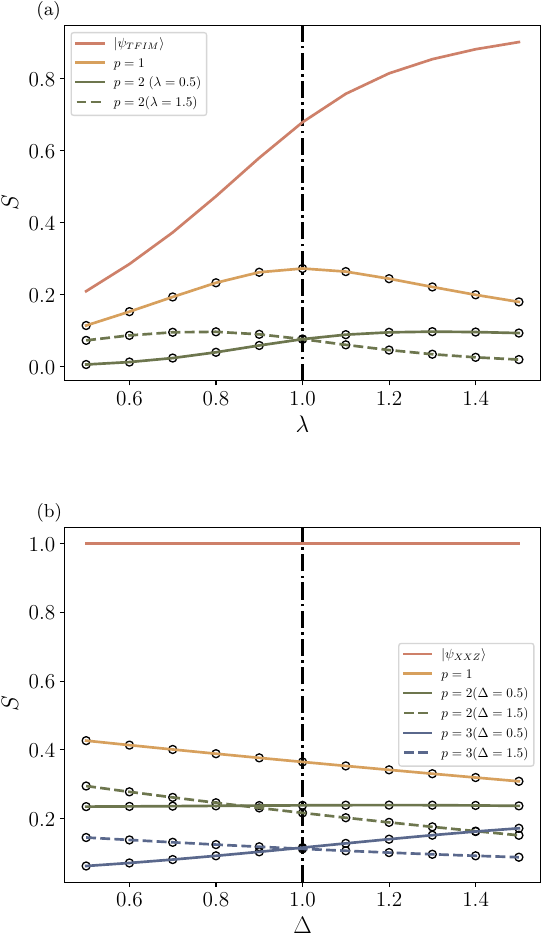}
		\caption{\label{fig:resluts} Resulting entanglement entropy of RL-designed disentangling circuits. $p$ represents the number of circuit layers. Solid and dashed lines indicate the disentangling circuits trained from different parameters. The models are for (a) TFIM and (b) XXZ model with 8 sites. The dash-dotted vertical line signifies the critical point.}
	\end{figure}
 
	At the critical point, $\lambda = 1$, there is a phase transition in the ground
	state in the thermodynamic limit~\cite{sachdev_2011}. The
	correlation function $\langle\sigma_{i}^{\alpha} \sigma_{j}^{\beta}\rangle-\langle\sigma_{i}^{\alpha}\rangle\langle\sigma_{j}^{\beta}\rangle$ decays
	polynomially as a function of separation at this point, while for all other
	values of $\lambda$, this decay is exponential.

	In Fig.~\ref{fig:resluts} (a), we show the results of the optimal variational
	algorithm to disentangle the ground state with different values of $\lambda$. We
	take two ground-states from two phases of the system to get the RL-ansatz
	quantum circuit. The operation space is set to the nearest qubits, i.e.,
	3 qubits. We also selected two training states asymmetrically 
	from two phases far from the critical point to exclude the effect 
	of selecting particular training points. (See supplementary 
	material for details on the robustness of the model.)
	
	The performance of the designed algorithm improves as the number circuit layers $p$ increases.
	As the RL-ansatz disentangling circuit with one layer ($p=1$),
	the transformation space is not large enough to cut the correlation with the
	two subsystems, so different training samples have the same results.
	However, unlike the raw local entanglement entropy case, the disentangle
	circuit entropy shows three regions with different values of $\lambda$. In both
	cases, the entanglement entropy rises quickly from a small number from point
	$\lambda=0.5$ to point $\lambda=0.9$ and then slowly rises to the maximum value around
	the critical point $\lambda=1$. The situation after the critical point $\lambda=1$ is
	precisely the opposite of before. So we can divide the space of $\lambda$ into
	``fast-slow-fast'' three regions according to the speed of entropy change.
	We find that the most significant entanglement is present in the parameter
	region close to the critical point. The slow region corresponds with the
	quantum critical region~\cite{sachdev_2011}. This indicates that the local quantum
	entanglement has a deep relation with the phase transition region.
	
	However, with the layer number larger than one, the results show different
	structures of circuits with different phases of $\lambda$. As Fig.~\ref{fig:resluts} (a)
	shows, the designed circuit structures with two layers have two optimal
	circuits in different phases. Also, in this case, the characteristic is
	obtained, and we detect the transition at $\lambda=1$. We confirm that the results
	with $p=2$ are indeed convergent (see Fig. S3 in Supplementary materials).
	
	\subsubsection{Understanding agent actions} 
	To confirm what agents have learned
	from disentanglement under TFIM, we first determine the optimal local
	disentanglement in theory. We proved that the minimum entropy of $\rho_P$
	decreased by unitary operation is determined by the initial state
	$\rho_{RP}$ of the region $RP$, on which the unitary operation acts. More
	specifically, we have the following theorem.
	
	\begin{theorem}
		Suppose the eigen-decomposition of the state $\rho_{RP}$ is
		$\rho_{RP} = \sum_{i=1}^{d_P d_R} p_i \ket{\psi_i} \bra{\psi_i}$, where $d_P$ ($d_R$) is
		the dimension of subsystem $P$ ($R$). Without loss of generality, we assume
		these eigenvalues are in decreasing order, i.e., $p_i \ge p_{i+1}$. By
		performing a unitary operation $U$ on the $RP$, the minimal
		entanglement entropy of the subsystem $P$ is
		\begin{equation}
			\min_U S(\rho_P(U)) = - \sum_m q_m {\rm \log_{2}} q_m,
		\end{equation}
		where $\rho_P(U) = {\rm Tr}_R \left[ U \rho_{RP} U^\dagger \right]$,
		$S(\rho) = - {\rm Tr} \left[ \rho {\rm \log_2} \rho \right]$ is the Von-Neumann
		entropy of state $\rho$, and $q_m = \sum_{j=1}^{d_R} p_{(m-1)d_R + j}$.
		\label{theorem1}
	\end{theorem}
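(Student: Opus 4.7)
The plan is to prove the theorem in two parts: an explicit construction showing the bound is achievable, followed by a majorization argument showing the bound cannot be beaten. The crucial observation throughout is that the spectrum $\{p_i\}$ of $\rho_{RP}$ is invariant under any unitary conjugation $\rho_{RP} \mapsto U\rho_{RP}U^\dagger$, so the whole problem reduces to asking which marginal spectra on $P$ are compatible with a fixed global spectrum $\{p_i\}$.

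For the achievability direction, I would relabel the global eigenindex $i = (m-1)d_R + j$ with $m \in \{1,\dots,d_P\}$, $j \in \{1,\dots,d_R\}$, and choose $U$ so that it sends the eigenvector $\ket{\psi_{(m-1)d_R+j}}$ to the product vector $\ket{m}_P \otimes \ket{j}_R$. After this rotation, $U\rho_{RP}U^\dagger$ is diagonal in the product basis, and tracing out $R$ yields $\rho_P(U) = \sum_m q_m \ket{m}\bra{m}$ with $q_m = \sum_{j=1}^{d_R} p_{(m-1)d_R+j}$, giving entropy exactly $-\sum_m q_m \log_2 q_m$. This exploits the decreasing ordering of the $p_i$ to place the largest block of eigenvalues into the first $P$-diagonal entry, and so on.

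For the lower bound, I would prove that for every unitary $U$, the vector of eigenvalues $\lambda(\rho_P(U))$ is majorized by $(q_1,\dots,q_{d_P})$, and then invoke Schur-concavity of the Shannon/Von-Neumann entropy. Majorization amounts to showing
\begin{equation}
\sum_{i=1}^k \lambda_i(\rho_P(U)) \;\le\; \sum_{m=1}^k q_m \;=\; \sum_{i=1}^{k d_R} p_i
\end{equation}
for each $k = 1,\dots,d_P$. By the Ky Fan variational principle, the left-hand side equals $\max_{\Pi_k} \mathrm{Tr}[(\Pi_k \otimes I_R)\, U\rho_{RP}U^\dagger]$ over rank-$k$ projectors $\Pi_k$ on $P$. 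But $\Pi_k \otimes I_R$ is a projector of rank $kd_R$ on $RP$, so applying Ky Fan once more at the level of $RP$ gives the bound $\sum_{i=1}^{kd_R} p_i$, which is precisely $\sum_{m=1}^k q_m$. Combined with Schur-concavity this yields $S(\rho_P(U)) \ge -\sum_m q_m \log_2 q_m$ for all $U$, matching the achievable value.

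The main obstacle I anticipate is not the Ky Fan step itself, which is standard, but rather making the achievability construction watertight: one must ensure that the map $\ket{\psi_{(m-1)d_R+j}} \mapsto \ket{m}_P\otimes\ket{j}_R$ extends to a genuine unitary on $RP$, which requires either $\rho_{RP}$ to be of full rank or a careful extension of the partial isometry on the kernel (the padding being consistent because the missing $p_i$ are zero and do not affect the reduced state). Once that bookkeeping is in place, the matching of the two bounds closes the proof.
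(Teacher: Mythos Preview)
Your proposal is correct. The achievability construction is identical to the paper's: both define a unitary $V$ sending $\ket{\psi_{(m-1)d_R+j}}$ to the product vector $\ket{m}\otimes\ket{j}$ and read off $\rho_P(V)=\sum_m q_m\ket{m}\bra{m}$. Your remark about extending the partial isometry on the kernel when $\rho_{RP}$ is rank-deficient is a point the paper leaves implicit.

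The lower-bound arguments differ. The paper writes an arbitrary $U$ as $WV$, expands $\rho_P(U)$ in the eigenbasis $\{\ket{n^{(U)}}\}$, and obtains $p^{(U)}_{n,\beta}=\sum_{m,\alpha}B_{n,\beta;m,\alpha}\,p_{(m-1)d_R+\alpha}$ with $B$ doubly stochastic; this yields $p^{(U)}\prec p$, which is then pushed through block-sums to get $q^{(U)}\prec q^{(U)}_\downarrow\prec q$, and Schur-concavity finishes. Your route is more direct: two applications of the Ky Fan maximum principle give $\sum_{i=1}^k\lambda_i(\rho_P(U))=\max_{\Pi_k}\mathrm{Tr}[(\Pi_k\otimes I_R)U\rho_{RP}U^\dagger]\le\sum_{i=1}^{kd_R}p_i$ in one stroke, establishing $\lambda(\rho_P(U))\prec q$ without ever introducing the auxiliary unitary $W$ or the doubly stochastic matrix. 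Your argument is shorter and avoids the somewhat delicate intermediate step $q^{(U)}\prec q^{(U)}_\downarrow$; the paper's approach, on the other hand, exposes more of the underlying majorization structure (the Schur--Horn-type relation between $p^{(U)}$ and $p$) and ties the result to standard references in Bhatia.
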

	
	\begin{proof}
		The proof is given in the given in
		Supplementary materials, Section S-\uppercase\expandafter{\romannumeral1}.
	\end{proof}
	
	Although \textbf{Theorem}~\ref{theorem1} gives the optimal local disentanglement
	theoretically, it does not imply that we can distinguish quantum phases by
	directly applying the theorem. In fact, the relation between disentanglement
	and quantum phase is subtle. Because the optimal solution will destroy all
	entanglement structures equally, the minimal entanglement entropy in our
	problem is so small in both phases that it can not be used to distinguish
	the phases. However, the RL agent often finds a local minimum but not a
	global one, which leads it to destroy the dominant entangled structure under
	a specific phase. Therefore, even if RL cannot give an optimal strategy in
	many cases, it may fits the problem.
	
	Interestingly, we found the RL agent finds the optimal solution to
	disentangle the local entanglement of the nearest two-body in the phase of
	$\lambda >1$ (Fig.~\ref{fig:TFIMdua}), which is exactly the solution of
	\textbf{Theorem}~\ref{theorem1} when $RP$ is restricted to two nearest spins. At first
	glance, the RL-ansatz disentangling circuit in the phase of $\lambda <1$ may be a
	suboptimal disentanglement of a complex three-body local entanglement.
	However, we realize that there is a well known Kramers-Wannier duality in
	TFIM~\cite{PhysRev.60.252,PhysRevD.17.2637}. For the model, the self-dual
	point is the phase transition point $\lambda=1$. When we perform the duality
	transformation which defines new Pauli operators $\mu_{i}^{z}$ and $\mu_{i}^{x}$
	in a dual lattice
	\begin{align}
		\mu_{j}^{x}=\prod_{i \leq j} \sigma_{i}^{z} \\
		\mu_{j}^{z}=\sigma_{j+1}^{x} \sigma_{j}^{x},
	\end{align}
	and the original Hamiltonian can be written as
	\begin{equation}
		H=-\sum_{j=0}^{N-1}\left(\mu_{j}^{x} \mu_{j+1}^{x}+\lambda \mu_{j}^{z}\right).
	\end{equation}
	
	We found that the optimal disentanglement of the nearest neighbor under the duality transformation coincides with the results of the RL agent in phase $\lambda <1$
	(Fig.~\ref{fig:TFIMdua}).
	\begin{figure}[!htbp]
		\centering 
		\includegraphics[width=0.45\textwidth]{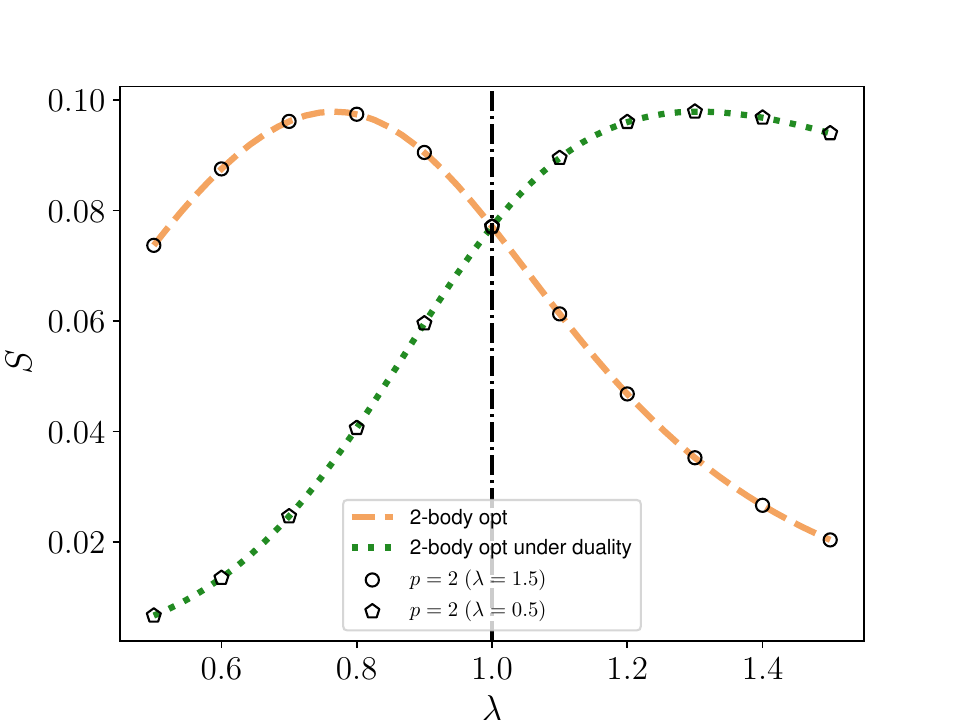}
		\caption{Entanglement entropy versus $\lambda$ for the transverse field Ising model (TFIM), comparing the reinforcement learning (RL)-ansatz disentangling circuit and the nearest two-body optimal disentanglement. 
			The lines depict the theoretically optimal outcomes, whereas the unfilled markers indicate the numerical results obtained from the quantum circuits designed based on two distinct ground states ($\lambda=0.5,1.5$). \label{fig:TFIMdua}}
	\end{figure}
	
	The difference between the
	result of the RL agent and the theoretical optimal solution is negligible.
	These results suggest that the agent learns the duality of the
	Hamiltonian, which is an important feature in classifying quantum
	phases. This further shows that the RL agent has found the main local
	entanglement structure between different phases.
	
	It is important to recognize that the inherent duality of the model is already known, serving as a fundamental component of our understanding of phase transitions. This established knowledge allows us to apply \textbf{Theorem}~\ref{theorem1} effectively to determine phase transition points. In contrast, our machine learning approach was not initially programmed with knowledge of this duality, yet it successfully identified the same critical point as established by theoretical analysis. This demonstrates the capability of our method to independently corroborate theoretical insights. In such cases, \textbf{Theorem}~\ref{theorem1} cannot leverage duality transformations to derive an optimal solution.
	
	It is worth noting that, for small areas, finding the optimal disentangler might be simple, allowing us to use an exact solution instead of a machine learning approximation. However, to do this, one must know the interaction area of the main entanglement structure. Discerning this structure within the context of a quantum system proves to be a nontrivial task, demanding an extensive array of calculations. In practice, the accurate determination of the phase transition point poses a considerable challenge for all currently established methods.

	\subsection{XXZ model} 
	The second model we apply our method to is the XXZ
	model. The Hamiltonian of the antiferromagnetic XXZ model on a 1D
	lattice with N sites is
	\begin{equation}
		H=\sum_{j=0}^{N-1}\left(\sigma_{j}^{x} \sigma_{j+1}^{x}+\sigma_{j}^{y}\sigma_{j+1}^{y}+\Delta\sigma_{j}^{z}\sigma_{j+1}^{z}\right),
	\end{equation}
	where $\Delta$ is a dimensionless parameter characterizing the anisotropy of
	the model. The XXZ spin chain has been widely
	studied~\cite{PhysRev.150.321,PhysRev.150.327,PhysRev.151.258,PhysRevA.2.1075,PhysRevA.3.786}.
	For the XXZ model, there exist two different phases at the ground state
	where $\Delta >0$, i.e., metallic phase, $0<\Delta\le1$, and
	antiferromagnetic phase, $\Delta >1$, which result from that the
	former is gap-less while the latter is gapped. The
	quantum phase transition is located at the isotropic point $\Delta=1$ which
	is the Kosterlitz-Thouless quantum phase transition (KT-QPT).
	
	First, we perform our learning scheme, using the same condition for the TFIM
	case. However, we found the results do not converge. So we set the
	operation space to include next-nearest qubits, i.e., 5 qubits.
	
	In Fig.~\ref{fig:resluts} (b), we show the results of the optimal circuits to
	disentangle the ground state with different values of $\Delta$. Similar to
	the TFIM model mentioned above, the performance of the designed circuits is
	proportional to the circuit depth $p$. As the RL-ansatz disentangling
	circuit with one layer ($p=1$), this case has a universal circuit structure
	for different values of $\Delta$, which is similar to the TFIM result.
	
	Nevertheless, unlike the TFIM case, the performance is different in the
	two-layer case. From Fig.~\ref{fig:resluts} (b), we observe interesting differences
	for disentangling circuit results at the crossing points. The intersection
	is far from the critical point because the number of layers of the quantum
	circuit is just between too shallow and enough layers. So we can see that
	there are two optimal solutions, but the intersection of the two is not the
	critical point. The above situation indicates that the ground-state
	entanglement structure of the XXZ model might be more complex than the TFIM
	case.
	
	When $p=3$, the optimal disentangling circuit is convergent (see
	Supplementary materials, Fig. S4), and the result is consistent with what we
	discussed in the former section. Also, we detect the transition at
	$\Delta=1$.
	
	\section{Comparison with variational quantum classifier}\label{sec:comp}
	In this section, we present an analytical comparison between our method and the variational quantum classifier, focusing on the perspective of resource requirements.
	
	The variational quantum classifier is a quantum algorithm designed for quantum phase classification. It is worth mentioning that numerous existing quantum classifiers have been proposed, with the majority of them being rooted in supervised learning~\cite{PhysRevA.101.032308,Cong2019}. This implies that additional information is incorporated during the training process. For the purpose of comparison, we have chosen variational quantum classifiers~\cite{PhysRevA.102.012415}. Much like our method, the variational quantum classifier utilizes a quantum circuit to identify quantum phase transitions. The training process of this technique involves a supervised learning approach, necessitating the advanced preparation of a substantial quantity of ground states as well as pairwise labels. Ultimately, the input verified quantum states are distinguished by voting on the observations of all qubits to determine the phase to which they belong.
	\begin{figure*}[!htbp]
		\centering
		\includegraphics[width=15cm]{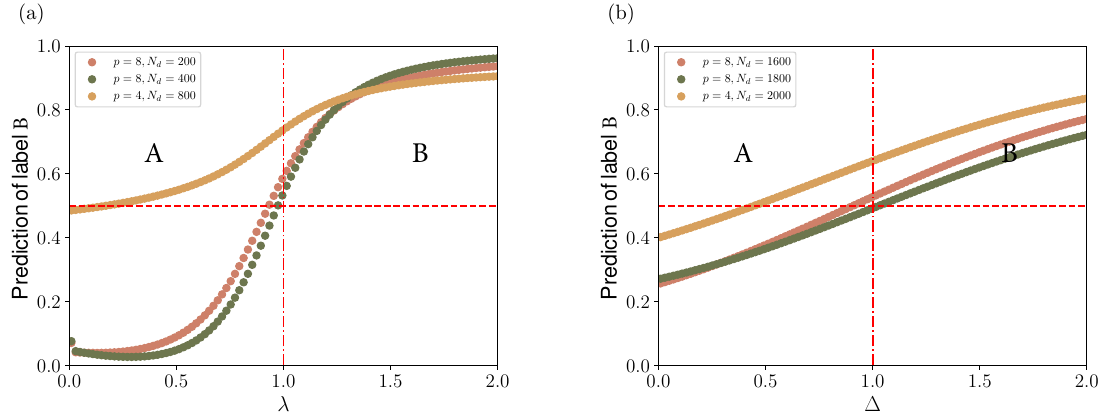}
		\caption{Performance of the variational quantum classifier: (a) TFIM with 8 sites; (b) XXZ model with 8 sites. $p$ represents the number of circuit layers, while $N_d$ denotes the quantity of training ground states. To ensure convergence of the results, we employed 1000 iterations of optimization for each data point. The predicted phase label corresponds to the expected probability that the classifier resides in a fixed quantum phase for a given state. When this probability reaches 0.5, we consider the system to be at a phase transition point.}
		\label{fig:vqc}
	\end{figure*}
	
	In this analysis, we simplify the variational quantum classifier method; the original version necessitates the use of a variational quantum eigensolver (VQE)~\cite{variational2014, PhysRevX.6.031007, PhysRevX.8.011021, PhysRevResearch.6.013205, liu2024variational} to simulate the ground state, while our focus lies solely on the classifier's performance. Consequently, we directly employ the system's ground state as input data. We implement this method to the two models previously examined in order to assess whether it demands fewer resources compared to our proposed approach.
	
	As illustrated in Fig.~\ref{fig:vqc}, the method demands a minimum of 8 layers in the quantum circuit and 200 and 1600 training ground states for the TFIM and XXZ models, respectively. The y-axis represents the probability of the algorithm predicting that a quantum state belongs to a specific phase in the absence of labels. A reliable algorithm can yield a probability of 0.5 near the critical point. In contrast, our method necessitates only two training ground states and a maximum of 3 layers in the quantum circuit. Furthermore, the method requires measurements of all qubits, whereas our approach mandates local measurements.
	
	\section{Scalability and robustness of the RL-ansatz disentangling circuit}\label{sec:scal}
	In previous discussions, the relationship between the entanglement structure, system size, and robustness was not explored. Further, our method only operates on
	the local space of the target qubit, which is easy to expand the scale of
	the system. So we want to study the effect of disentangling circuit training
	in the small system on a large system.
	
	In Fig.~\ref{fig:TFIMN} and Fig.~\ref{fig:XXZN}, we show the results of the
	disentangling circuit trained in 8-qubit systems on N-qubit problems with
	$N > 10$. For the RL-designed disentangling circuit, the circuit is obtained
	by a training process on a problem with qubit number $N = 8$ then applied to
	problems having a larger number of qubits ($N = 10,12,14$). The results show
	that our solution is not affected by the size of the system. This further
	implies that the entanglement structure is not affected by the size but is
	more closely related to the quantum phase.
	
	\begin{figure*}[!htbp]
		\centering
		\includegraphics[width=\textwidth]{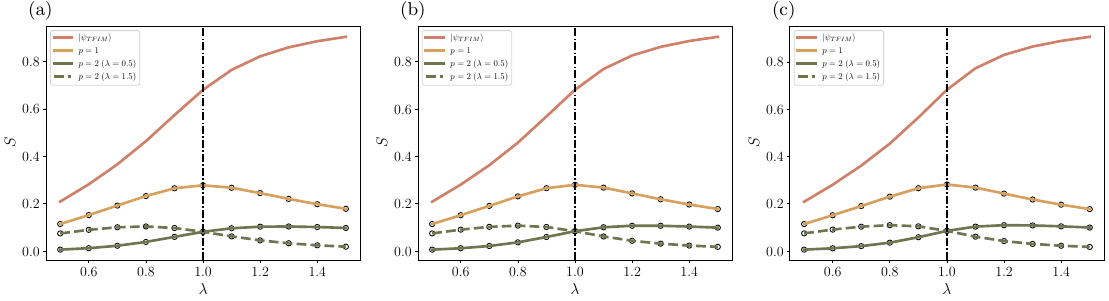}
		\caption{Performance of RL-designed disentangling circuits applied to TFIM
			with different numbers of qubits. In this plot, the RL-designed circuits
			are obtained by training on the problem with qubit number $N = 8$ then
			applied to problems with a larger number of qubits, (a):$N=10$ (b):$N=12$
			(c):$N=14$.}
		\label{fig:TFIMN}
	\end{figure*}
	
	\begin{figure*}[!htbp]
		\centering
		\includegraphics[width=\textwidth]{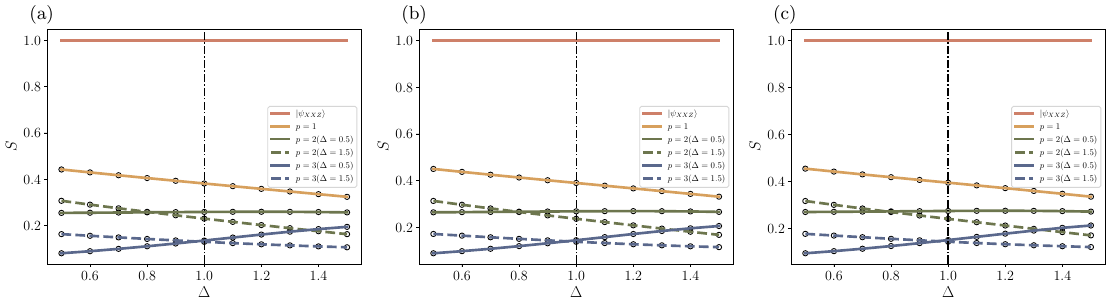}
		\caption{Performance of RL-designed disentangling circuits applied to XXZ model
			with different numbers of qubits. In this plot, the RL-designed circuits 
			are obtained by training on the problem with qubit number $N = 8$ then
			applied to problems with a larger number of qubits, (a):$N=10$ (b):$N=12$
			(c):$N=14$.}
		\label{fig:XXZN}
	\end{figure*}
	
	To verify the robustness of our RL-ansatz disentangling circuit with imperfect input states, we evaluated the performance of the training scheme using imperfect ground states generated by the VQE method. This method is often used in experimental settings where the exact ground state is not accessible. Through this evaluation, we aim to demonstrate the resilience and efficacy of our model, even when faced with suboptimal input data. Since obtaining the ground state of the TFIM is relatively straightforward and we have a theoretically optimal solution for this model, we chose to utilize the ground state of the XXZ model, which poses more significant challenges.
	
	\begin{figure*}[!htbp]
		\centering
		\includegraphics[width=15cm]{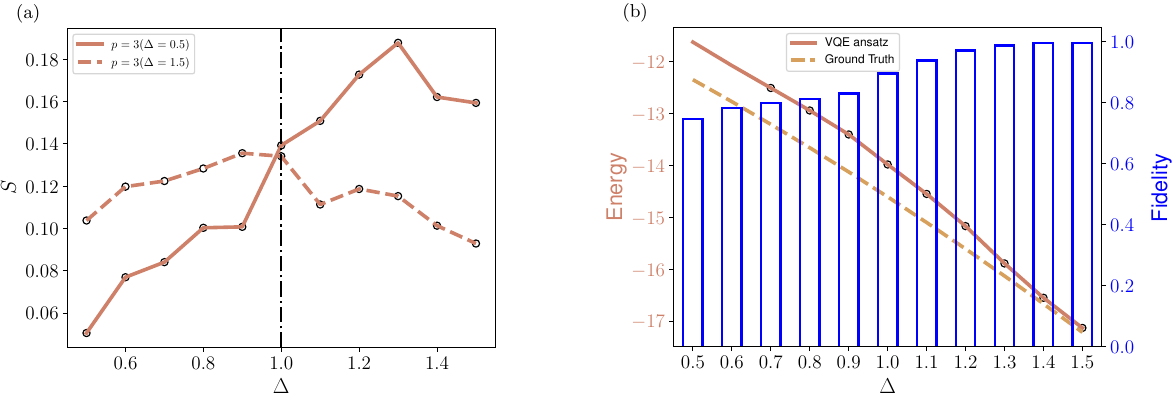}
		\caption{\label{fig:vqest}Performance of RL-designed disentangling circuits on Imperfect Ground States in the XXZ Model with $N=8$. (a) Resulting entanglement entropy of RL-designed disentangling circuits applied to the ground state prepared via the VQE. (b) Comparative analysis of energy and fidelity between the VQE-ansatz ground state and the actual ground state of the XXZ model. The energy of the ground state as predicted by the VQE ansatz (solid line) and the actual ground state (dashed line). The fidelity values between these states are represented as a blue outline bar chart.}
	\end{figure*}
	
	As shown in Fig.~\ref{fig:vqest}, our method effectively identifies phase transition points even at relative low fidelity ground states. This performance highlights the robustness of our methodology, confirming its ability to handle less-than-ideal conditions effectively. In other words, our approach can also be applied to detect phase transitions in scenarios where the ground state is imperfect.

	\section{Discussion}\label{dis}
	
	In this paper, we have presented an algorithm for studying entanglement structures and identifying quantum phases in many-body systems. The approach combines deep reinforcement learning-optimized variational quantum circuits, enabling a systematic investigation of entanglement patterns and offering valuable insights for both theoretical and experimental contexts.
	
	Our method serves distinct purposes compared to theoretically optimal disentanglement methods, such as those described in \textbf{Theorem}~\ref{theorem1}. While \textbf{Theorem}~\ref{theorem1} provides a mathematical framework to determine the minimal entanglement entropy achievable through unitary transformations, it does not specify the actual quantum state that corresponds to this minimal entropy. Consequently, \textbf{Theorem}~\ref{theorem1} lacks practical guidance on designing specific unitary operations for achieving this state. Our machine learning approach addresses this gap by not only predicting the minimal entropy but also facilitating the design of the corresponding disentangling unitary operations. This is crucial for practical applications where explicit construction of quantum circuits is necessary. Furthermore, our approach excels in identifying quantum phases and transitions by recognizing and disentangling key entanglement structures, thereby facilitating the detection of quantum phase transitions, theoretical disentanglement methods primarily focus on elucidating the theoretical underpinnings of these disentangling operations.
	Additionally, the application of RL methods enhances our understanding of entanglement structures and phase transitions across diverse systems. RL approaches are particularly resilient against system imperfections and data noise, capable of learning to identify critical features of phase transitions during the training process. Unlike traditional methods that rely on known order parameters like magnetization or correlation measures, RL can potentially discover novel indicators of phase transitions by directly analyzing data. This capability is especially advantageous in systems with poorly understood transitions or where traditional indicators are difficult to measure.
	
	From a theoretical standpoint, our algorithm sheds light on the understanding and characterization of quantum phases by leveraging the entanglement structures in quantum many-body systems. It not only allows for the identification of essential entanglement structures but also facilitates more efficient classification of quantum phases. Furthermore, the potential connections with measurement-induced phase transitions~\cite{PhysRevB.98.205136,PhysRevB.100.134306,PhysRevX.9.031009,PhysRevX.10.041020,PhysRevB.99.224307} present an exciting avenue for future research. By adapting our method to account for the effects of measurements on quantum many-body systems, we could uncover a more comprehensive understanding of the interplay between entanglement structures, quantum phase transitions, and measurement-induced phase transitions.
	
	For experimentalists, our algorithm provides a practical and scalable approach to phase identification in quantum many-body systems. It requires only local quantum operations and entanglement measures of local qubits, making it ideally suited for direct application in existing experiments, even without comprehensive knowledge of the quantum states. Utilizing a limited number of operations and measurements, the quantum circuit can be extended to accommodate large-sized systems. When attempting to measure the reduced density matrix in a many-body system, our algorithm can aid in identifying the quantum phase based on entanglement structures. Although performing such measurements can be challenging, indirect methods and experimental techniques are available for extracting information about the reduced density matrix or specific properties of interest, such as entanglement entropy or correlations.
	
	The choice of the region size for disentanglement depends on the entanglement structures' complexity and the specific many-body system under consideration. While our approach demonstrates success in the context of Ising spin chain systems, extending it to higher dimensions, critical systems, and topological systems may necessitate further development and customization. For instance, in topological systems, a more refined approach may be necessary to capture the intrinsic long-range entangled structures that cannot be disentangled via local unitary operations within a specific depth.
	
	Although finding optimal disentanglers analytically may be possible for small regions, the numerical approach using reinforcement learning offers several advantages. As the size of the region or the complexity of the entanglement structures increases, reinforcement learning provides scalability, adaptability, and robustness to imperfections, making it a versatile framework for quantum phase identification across various quantum many-body systems.
	
	\section{Conclusion}\label{Conclusion}
	
	In conclusion, we have proposed an innovative reinforcement learning approach for classifying quantum phase transitions through disentanglement. Our method employs deep Q-learning to design and optimize variational quantum circuits that effectively disentangle the ground state of a quantum system. By analyzing the performance of these disentangling circuits across a range of system parameters, we can identify the critical point that demarcates distinct quantum phases.
	
	We demonstrated the effectiveness of our approach on two paradigmatic models of quantum magnetism: the transverse field Ising model and the XXZ model. For both models, our method accurately identified the critical point and phase transition. Moreover, we found similarities in the disentangling circuit structures for states within the same phase, shedding light on the relation between entanglement structure and circuit design. Our approach not only classifies quantum phases but also provides insight into their inherent entanglement properties.
	
	Compared to existing methods, our approach requires significantly fewer resources, using local measurements and a few layers of variational quantum circuits. We also showed that disentangling circuits trained on small system sizes can be applied to much larger systems, exhibiting the scalability of our method. We expect that our approach may be extended to characterize
	more exotic quantum phases, e.g., a phase with topological order and a spin
	liquid. It provides a universal way to quantify entanglement structures of
	quantum phases. Our work highlights the promise of machine learning, and reinforcement learning in particular, for exploring and understanding the properties of complex quantum systems.

	\section{Acknowledgments} 
	This work is supported by the National
	Key Research and Development Program of China (Grant No. 2021YFA0718302 and
	No. 2021YFA1402104), NSF of China (Grants No. 11775300 and No. 12075310),
	and the Strategic Priority Research Program of Chinese Academy of Sciences
	(Grant No. XDB28000000).

\bibliography{main.bbl}
\bibliographystyle{quantum}

	~\\
\clearpage 
\onecolumngrid
\appendix
 \begin{center}
        \textbf{\large Supplemental Materials: Learning quantum phases via single-qubit disentanglement}
    \end{center}

    \setcounter{equation}{0}
    \setcounter{figure}{0}
    \setcounter{table}{0}
    \setcounter{page}{1}
    \renewcommand{\theequation}{S\arabic{equation}}
    \renewcommand{\thefigure}{S\arabic{figure}}
    \renewcommand{\thetable}{S\arabic{table}}
    \renewcommand{\thesection}{S-\Roman{section}}
    
     \section{Optimal local disentanglement}
    
    In this part, we aim to prove Theorem~1 in the main text. To this end,
    let's review the theory of majorization first, which plays an important role in
    quantum information theory.
    
    Suppose we have two $n$-dimensional vectors $p = (p_1, p_2, \cdots, p_n)$
    and $q = (q_1, q_2, \cdots, q_n)$. Without loss of generality, we assume
    that $p_{i} \ge p_{i+1}$ and $q_i \ge q_{i+1}$ for $i = 1,2,\cdots,n-1$.
    The theory of majorization gives a way to compare these two vectors:
    
    \textbf{Definition~S1}. Vector $p$ is majorized by $q$, denoted as $ p \prec q$, if
    \begin{equation}
        \sum_{i=1}^{k} p_i \le \sum_{i =1}^k q_i
    \end{equation}
    holds for every $k = 1,2,\cdots,n$.
    
    From the Definition~S1, we can see that if $ p \prec q$ then vector $p$ is more
    disordered in the distribution of components than vector $q$. Indeed, we have
    a theorem~\cite{Bahtia1997}:
    
    \textbf{Theorem~S1}. $p \prec q$ if and only if $p = D q$, where $D$ is doubly stochastic
    matrix, i.e., $D$ has non-negative entries and each row and column sums to $1$.
    
    Now we can prove Theorem~1 in the main text.
    Suppose state $\ket{\Phi}$ is a pure state of system $RPQ$. Then the reduced
    density matrices  of subsystem $RP$ is defined as
    \begin{equation}
        \rho_{RP} = {\rm Tr}_{Q} \left[ \ket{\Phi} \bra{\Phi} \right],
    \end{equation}
    When we perform a local unitary operation $U$ acting on subsystem $RP$, the reduced state of subsystem $P$ becomes
    \begin{equation}
        \label{eq:1}
        \rho_P(U) = {\rm Tr}_{R} \left[ U \rho_{RP}
        U^\dagger \right]
    \end{equation}
    By choosing a proper unitary transformation we aim to minimize the entropy
    \begin{equation}
        \label{eq:2}
        \min_{U} S(\rho_{P}(U))
    \end{equation}
    where $S(\rho) = - {\rm Tr} \left[ \rho {\rm \log_2} \rho \right]$, and $S(\rho_{P})$
    characterizes the entanglement between subsystem $P$ and subsystem $RQ$.
    
    Let the dimension of the Hilbert space of subsystem $P$ be $d_{P}$, and let the dimension of the Hilbert space of subsystem $R$ be $d_R$.
    We start with the eigen-decomposition of the state $\rho_{RP}$
    \begin{equation}
        \rho_{RP} = \sum_{i=1}^{d_P d_R} p_i \ket{\psi_i} \bra{\psi_i},
    \end{equation}
    where $p_i$ is the $i$-th eigenvalue, and $\ket{\psi_i}$ is the corresponding eigenstate.
    Without loss of generality, we assume these eigenvalues are in decreasing
    order, i.e., $p_i \ge p_{i+1}$. Then, for $1\le m\le d_P$ let
    \begin{equation}
        \label{eq:3}
        q_m = \sum_{j=1}^{d_R} p_{(m-1)d_R + j}.
    \end{equation}
    
    Now we are ready to give the following theorem on optimal local
    disentanglement.
    
    \textbf{Theorem~S2}: The minimal entanglement entropy
    \begin{equation}
        \label{eq:4}
        \min_{U} S(\rho_{P}(U)) = H(\{q_m\}) \equiv - \sum_m q_m \log_2 q_m.
    \end{equation}
    
    \textbf{Proof}: Let us take a basis of $\mathcal{H}_P$ as
    $\{|m\rangle, 1\le m\le d_P\}$, and a basis of $\mathcal{H}_R$ as
    $\{|\alpha\rangle, 1\le \alpha\le d_R\}$. Then we construct a unitary transformation
    \begin{equation}
        \label{eq:5}
        V |\psi_{(m-1)d_R + \alpha}\rangle = |m, \alpha\rangle
    \end{equation}
    Then
    \begin{align}
        \label{eq:6}
        \rho_P(V) & = \sum_{m,\alpha} p_{(m-1)d_R+\alpha} \Tr_R(|m,\alpha\rangle\langle m,\alpha|) \nonumber\\
        & = \sum_m q_m |m\rangle \langle m|
    \end{align}
    which implies that
    \begin{equation}
        \label{eq:7}
        S(\rho_P(V)) = H(\{q_{m}\})
    \end{equation}
    For any local unitary $U$, the eigen-decomposition of $\rho_P(U)$ is
    \begin{equation}
        \label{eq:8}
        \rho_P(U) = \sum_m q_m^{(U)} |m^{(U)}\rangle\langle m^{(U)}|
    \end{equation}
    Then
    \begin{align}
        \label{eq:9}
        \rho_P(U) & = \rho_P(WV) \nonumber\\
        & = \sum_{m,\alpha} p_{(m-1)d_R+\alpha} \Tr_R(W|m,\alpha\rangle\langle m,\alpha|W^{\dagger}) \nonumber\\
        & = \sum_{n,\beta} \sum_{m,\alpha}  p_{(m-1)d_R+\alpha} \langle n^{(U)},\beta|W|m,\alpha\rangle\langle m,\alpha|W^{\dagger}|n^{(U)},\beta\rangle |n^{(U)}\rangle\langle n^{(U)}| \nonumber\\
        & = \sum_{n,\beta} p^{(U)}_{n,\beta} |n^{(U)}\rangle \langle n^{(U)}|
    \end{align}
    where $W=UV^{-1}$ and
    \begin{align}
        \label{eq:10}
        p^{(U)}_{n,\beta} = \sum_{m,\alpha} B_{n,\beta;m,\alpha} p_{(m-1)d_R+\alpha}
    \end{align}
    with
    \begin{equation}
        \label{eq:11}
        B_{n,\beta;m,\alpha} = \langle n^{(U)},\beta|W|m,\alpha\rangle\langle m,\alpha|W^{\dagger}|n^{(U)},\beta\rangle
    \end{equation}
    Combining Eq.~\ref{eq:8} with Eq.~\ref{eq:9}, we arrives at
    \begin{equation}
        \label{eq:17}
        q_m^{(U)} = \sum_{\beta} p^{(U)}_{m,\beta}
    \end{equation}
    Notice that $B_{n,\beta;m,\alpha}\ge 0$ and
    \begin{equation}
        \label{eq:12}
        \sum_{n,\beta} B_{n,\beta;m,\alpha} = \sum_{m,\alpha} B_{n,\beta;m,\alpha} = 1
    \end{equation}
    Namely, $B$ is a doubly stochastic matrix. According to the Theorem~S1 above
    or Theorem II.1.9 in Bhatia's book~\cite{Bahtia1997}, we obtain
    \begin{equation}
        \label{eq:13}
        p^{(U)} \prec p
    \end{equation}
    Now we denote $p^{(U)}_{\downarrow}$ as $p^{(U)}$ in the decreasing order. Similarly, we
    introduce
    \begin{equation}
        \label{eq:14}
        q^{(U)}_{\downarrow,m} = \sum_{\alpha} p^{(U)}_{\downarrow,(m-1)d_R+\alpha}
    \end{equation}
    Then according to the definition of majorization,
    \begin{align}
        \label{eq:15}
        q^{(U)}_{\downarrow} & \prec q \\
        q^{(U)} & \prec q^{(U)}_{\downarrow}
    \end{align}
    which imples that $q^{(U)}\prec q$. Thus according to Theorem II.3.1 in Bhatia's
    book~\cite{Bahtia1997},
    \begin{equation}
        \label{eq:16}
        S(\rho_P(V)) \le S(\rho_P(U)).
    \end{equation}
    This completes our proof.
    
    In our context, $|\Phi\rangle$ is the ground state of the considered Hamiltonian $H$,
    $P$ is the target qubit that we aim to disentangle from the ground state,
    and $RP$ is the set of qubits that the unitary transformations $U$ for
    disentangling $P$ act on.
    
     \section{Reinforcement learning: Q-learning method}\label{Q-learning}
    
    The reinforcement learning process is a finite Markov decision
    process~\cite{sutton}. As shown in Fig.~\ref{fig:MDP}, this Markov process is
    described as a state-action-reward sequence, a state $S_t$ at time $t$ is
    transmitted into a new state $S_{t+1}$ together with giving a scalar reward
    $R_{t+1}$ at time $t+1$ by the action $A_t$ with the transmission probability
    $p(S_{t+1};R_{t+1}|S_t;A_t)$. For a finite Markov decision process, the sets of
    the states, the actions and the rewards are finite. The total discounted return
    at time $t$
    \begin{equation}
    	G_{t}=\sum_{k=0}^{\infty}\gamma^{k}R_{t+k+1},
    \end{equation}
    where $\gamma$ is the discount rate and $0\le\gamma\le1$.
    
    \begin{figure}[h]
    	\centering \includegraphics[width=0.85\columnwidth{},keepaspectratio]{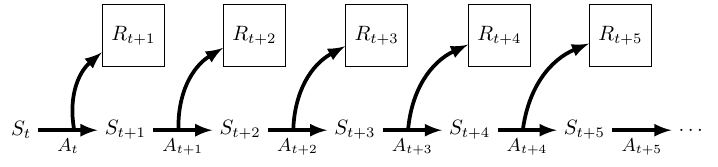}
    	\caption{\label{fig:MDP}A schematic diagram of Markov decision process.}
    \end{figure}
    
    The goal of an RL algorithm is to maximize the total discounted return for each
    state and action selected by the policy $\pi$, which is specified by a conditional
    probability of selecting an action $A$ for each state $S$, denoted as $\pi(A|S)$.
    
    In the Q-learning algorithm~\cite{Watkins1989}, the objective is to maximize the
    value of state-action function, the conditional discount return
    \begin{equation}
    	\label{eq:1}
    	Q_{\pi}(s,a) = E_{\pi}(G_{t}|S_{t}=s,A_{t}=a).
    \end{equation}
    
    \begin{figure}[h]
    	\centering \includegraphics[width=0.43\columnwidth{},keepaspectratio]{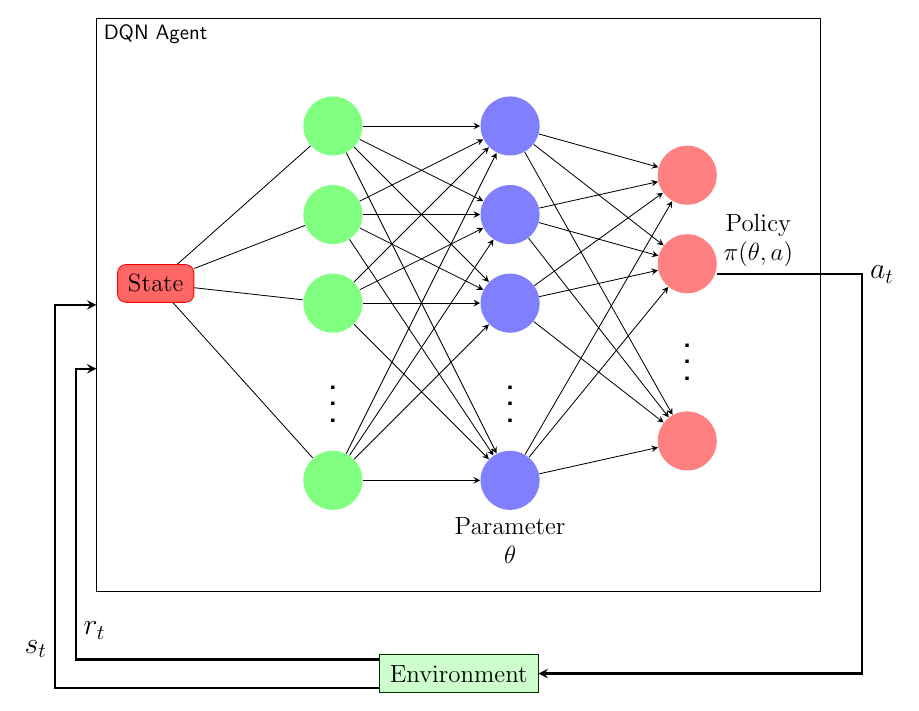}
    	\caption{\label{fig:DQN}A diagram of deep Q-network architecture.}
    \end{figure}
    
    If we have a policy $\pi$, then we calculate the value of state-action function
    $Q_{\pi}(S,A)$. For each state $s$, we take an action maximizing the value of
    state action $Q_{\pi}(s,A)$, which forms our new policy $\pi^{\prime}$. Then we calculate
    the value of state-action function $Q_{\pi^{\prime}}(S,A)$. Repeating the above
    procedure until the new policy equals the updated one, which is the optimal
    policy $\pi^{*}$ we are looking for.
    
    The update of Q-learning algorithm is defined as
    \begin{equation}
    	Q(S_{t},A_{t})  \leftarrow Q(S_{t},A_{t}) + \Delta Q
    	\label{Q}
    \end{equation}
    with
    \begin{equation}
    	\label{eq:5}
    	\Delta Q = \alpha[R_{t+1}+\gamma 
    	\max_{a}Q(S_{t+1},a)-Q(S_{t},A_{t})],
    \end{equation}
    where $\alpha$ is the step size parameter.
    
    As shown in Fig.~\ref{fig:DQN}, a deep Q-network is composed by a deep neural
    network, trained with a variant of Q-learning, whose input is state vector and
    whose output is a value function estimating future rewards. One significant
    improvement is that it designed a memory buffer to store the agent’s experiences
    and used them later. At each learning step, we fed the agent with a minibatch of
    experiences $\{S_{t},A_{t},R_{t+1},S_{t+1}\}$ with a modified prioritized
    experience replay (PER) method~\cite{PER}. The state $S_{t}$ is fed into the
    neural network to calculate the state-action value ${Q}(S_{t},A_{t};\theta)$. At the
    same time, a target Q-network is to calculate
    $\max\limits_{a'}{Q}(S_{t+1},a';\theta^{-})$ in Eq.~(\ref{loss}). At the end of each
    step of training, the evaluation network is updated through the back-propagation
    by minimizing the loss. Based on Eq.~(\ref{Q}), the loss is the mean square
    error (MSE) of the difference between the evaluation ${Q}(S_{t},A_{t};\theta)$ and
    the target $\max\limits_{a'}{Q}(S_{t+1},a';\theta^{-})$
    \begin{equation}
    	\rm{loss}=\rm{MSE}((R_{t+1} + 
    	\gamma\max\limits_{a'}{{Q}}(S_{t+1},a';\theta^{-}))-{Q}(S_{t},A_{t};\theta)).
    	\label{loss}	
    \end{equation} 
    
    During the learning episodes, the agent updates the
    parameters of the target network $\theta^{-} \rightarrow \theta$ to make better decisions.
    
     \section{The robustness of the training} 
     To confirm
    that the training scheme indeed extracts non-trivial structures 
    from the
    input state, we have checked the performance from the training 
    scheme when
    the quantum circuit is trained on different initial values and 
    states. To
    show the robustness of the RL-ansatz quantum circuit, we examined 
    our method
    with two example cases: (\romannumeral1) the initial parameters 
    of rotation
    gate extracted from the uniform distribution with mean value
    $\{0,\pi/2,\pi\}$ and (\romannumeral2) the disentangling circuit 
    training
    from different ground states of parameters ($\lambda$ or 
    $\Delta$).
    
    \begin{figure*}[h]
    	\centering \subfloat[]{%
    		\includegraphics[width=0.32\textwidth]{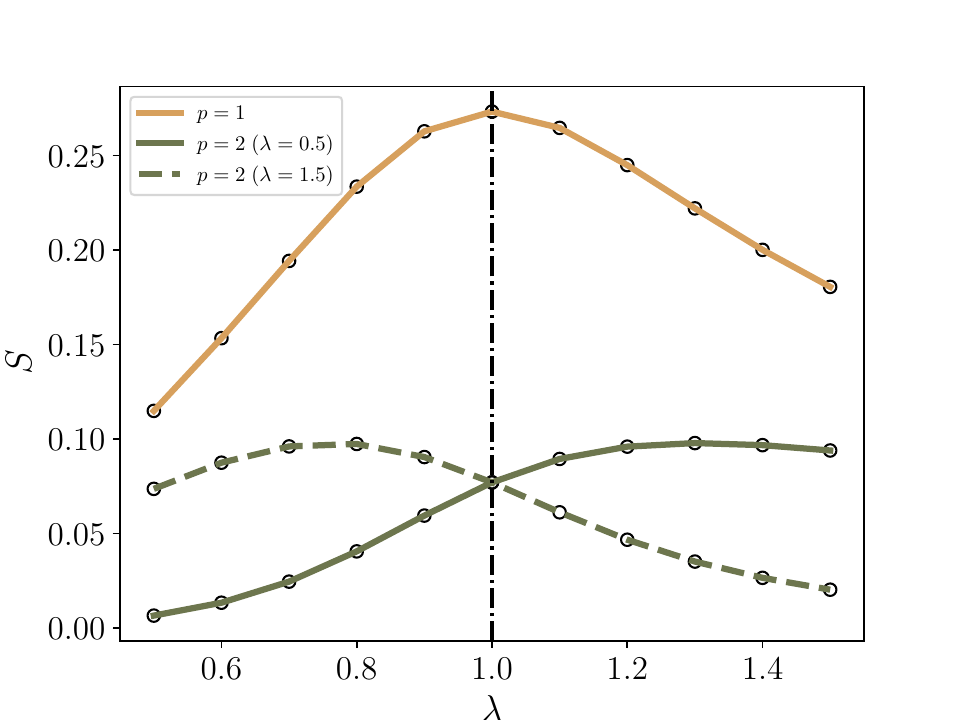}} 
    		\hspace{\fill}
    	\subfloat[]{%
    		\includegraphics[width=0.32\textwidth]{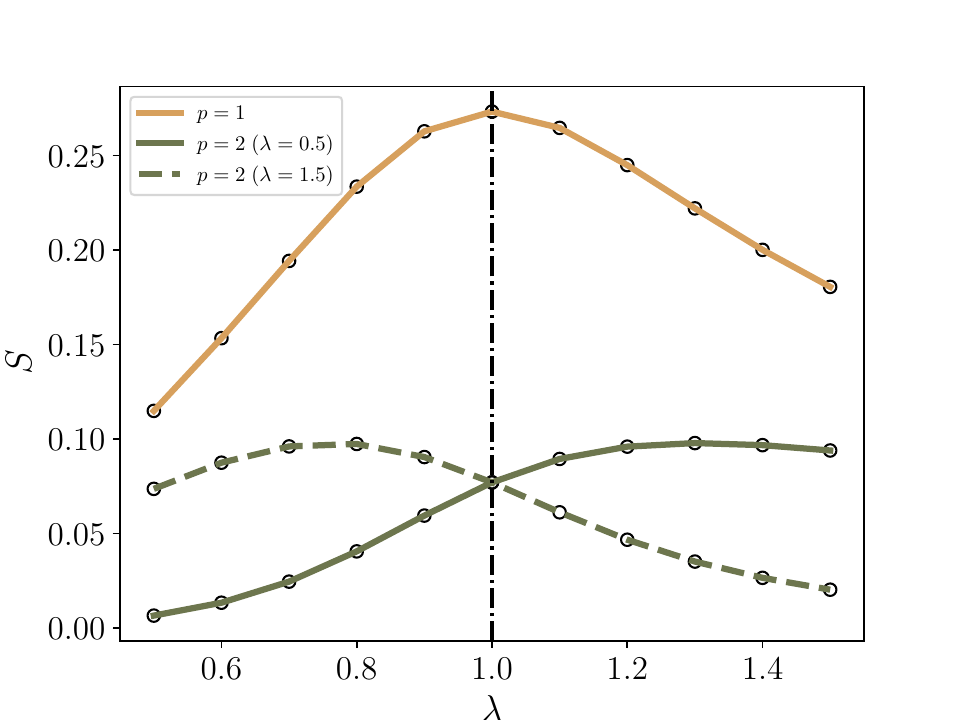}} 
    		\hspace{\fill}
    	\subfloat[]{%
    		\includegraphics[width=0.32\textwidth]{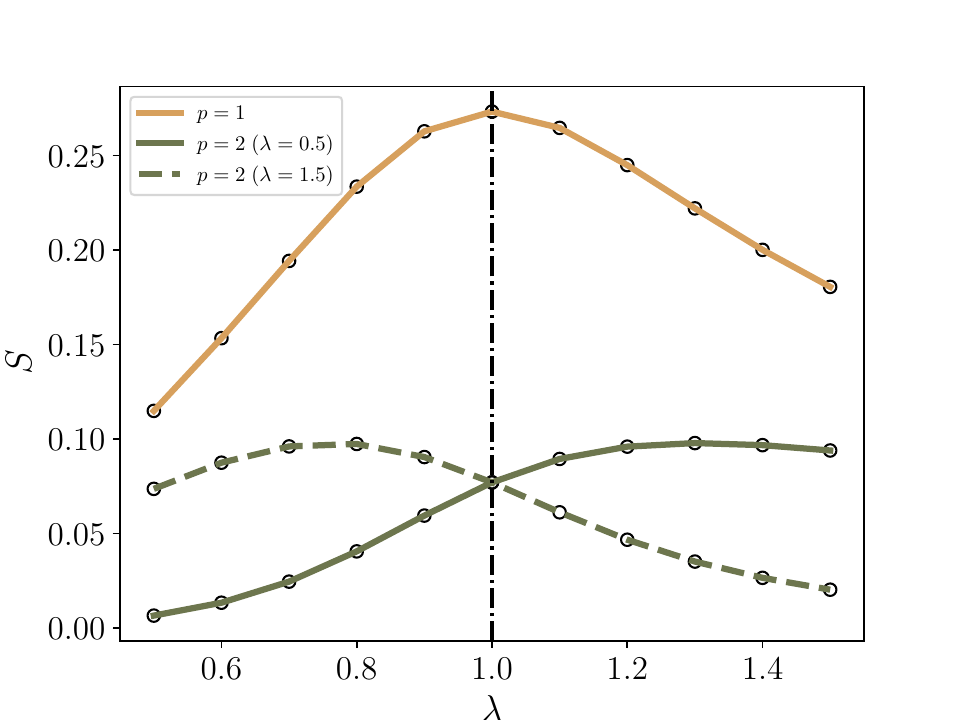}}\\
    	\caption{Entanglement entropy versus $\lambda$ with the 
    	RL-ansatz
    		disentangling circuit for TFIM with $N=8$ of 
    		(\romannumeral1) under
    		different initial values. (A): uniform distribution with 
    		mean value $0$
    		(B): uniform distribution with mean value $\pi/2$ (C): 
    		uniform
    		distribution with mean value $\pi$. The solid line 
    		denotes the
    		disentangling circuit training from $\lambda=0.5$; the 
    		dashed line
    		denotes the disentangling circuit training from 
    		$\lambda=1.5$.}
    	\label{fig:TFIMn}
    \end{figure*}
    
    \begin{figure*}[h]
    	\centering \subfloat[]{%
    		\includegraphics[width=0.32\textwidth]{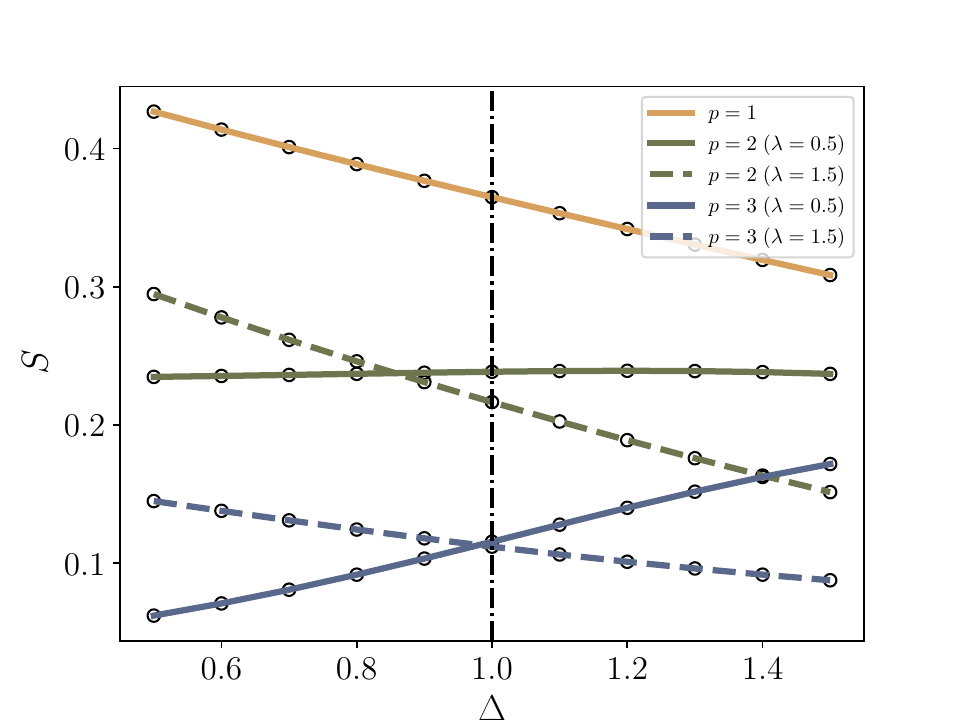}} 
    		\hspace{\fill}
    	\subfloat[]{%
    		\includegraphics[width=0.32\textwidth]{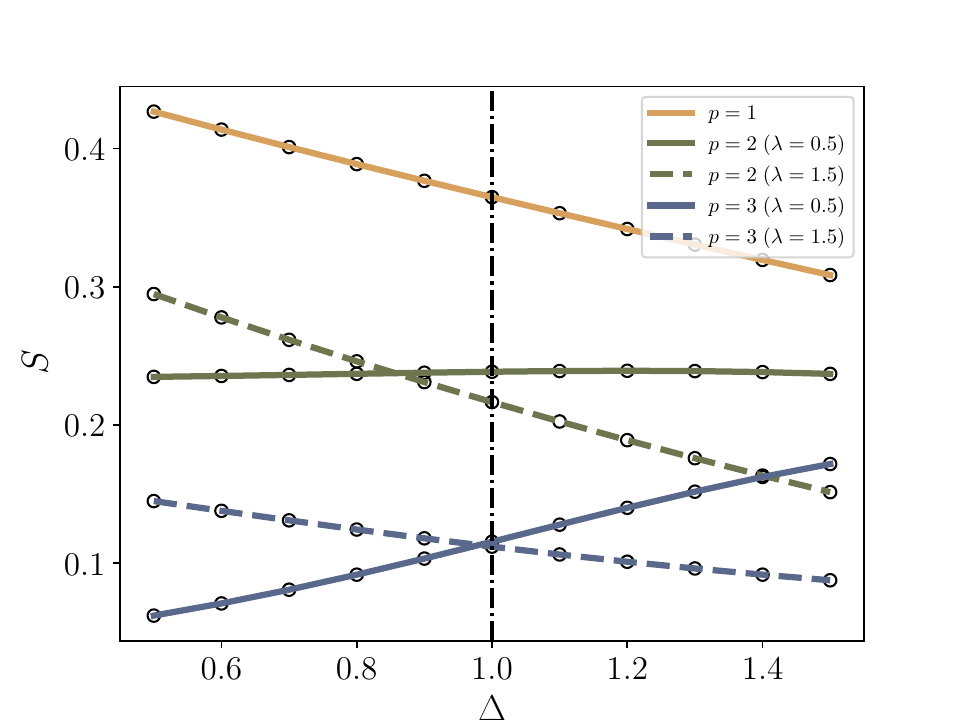}} 
    		\hspace{\fill}
    	\subfloat[]{%
    		\includegraphics[width=0.32\textwidth]{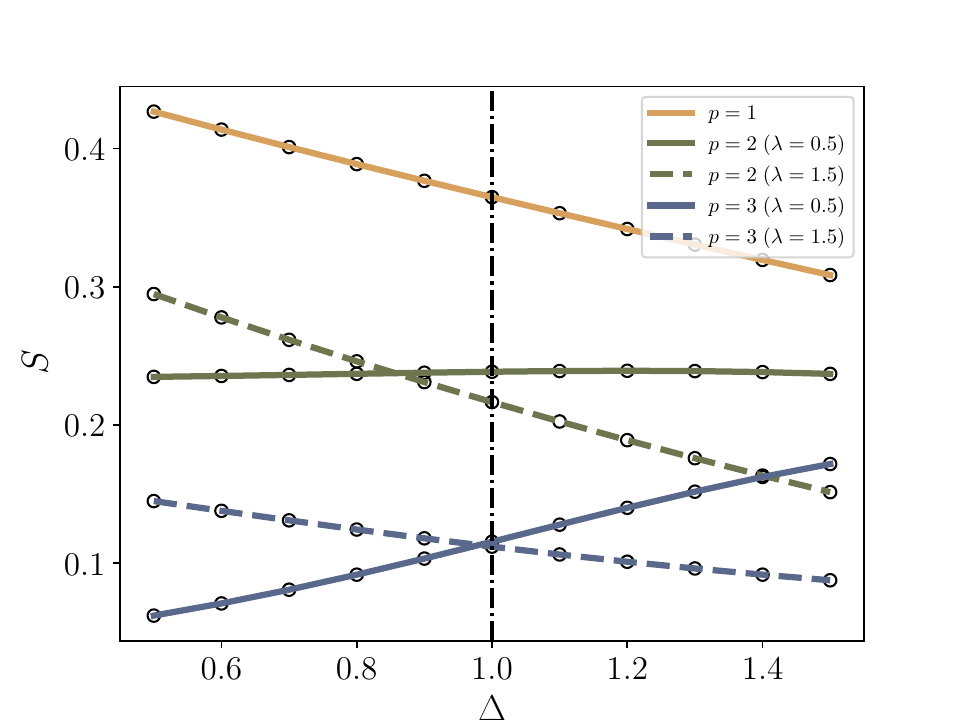}}\\
    	\caption{Entanglement entropy versus $\Delta$ with the 
    	RL-ansatz
    		disentangling circuit for XXZ model with $N=8$ of 
    		(\romannumeral1) under
    		different initial values. (A): uniform distribution with 
    		mean value $0$
    		(B): uniform distribution with mean value $\pi/2$ (C): 
    		uniform
    		distribution with mean value $\pi$. The solid line 
    		denotes the
    		disentangling circuit training from $\Delta=0.5$; the 
    		dashed line
    		denotes the disentangling circuit training from 
    		$\Delta=1.5$.}
    	\label{fig:XXZn}
    \end{figure*}

    \begin{figure*}[h]
    	\centering \subfloat[]{
    		\includegraphics[width=0.45\textwidth]{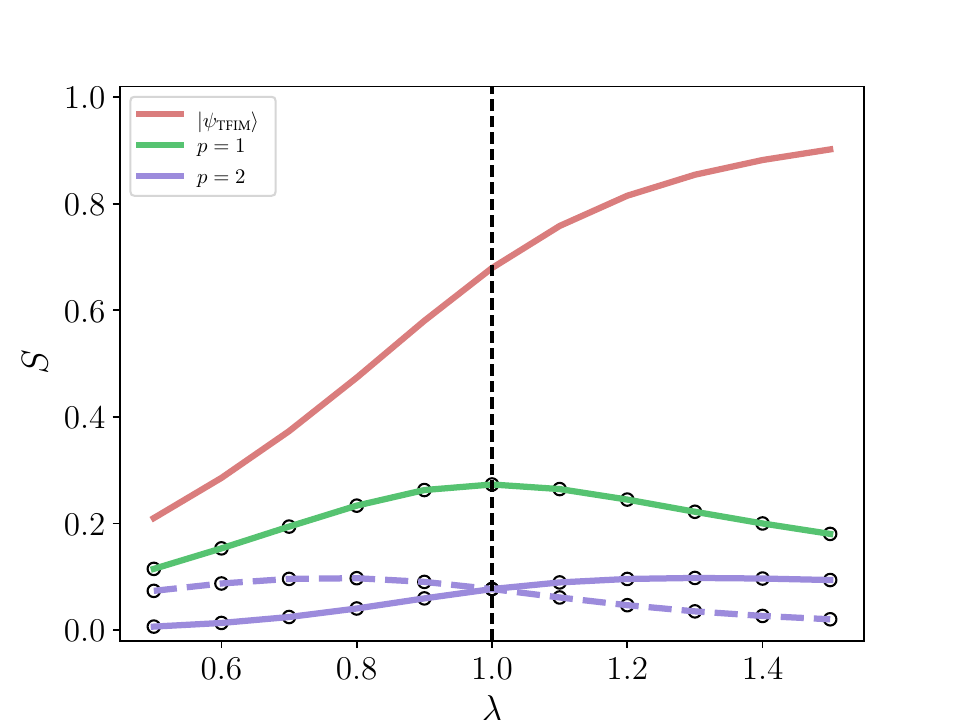} } 
    		\subfloat[]{
    		\includegraphics[width=0.45\textwidth]{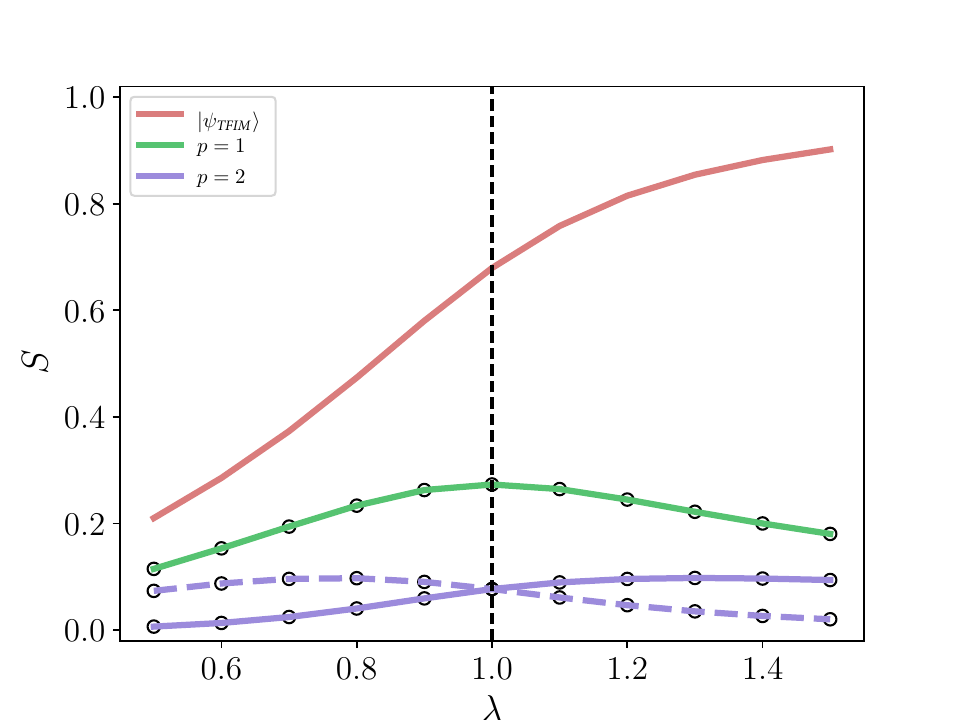} }
    	
    	\caption{Entanglement entropy versus $\lambda$ with the 
    	RL-ansatz
    		disentangling circuit for TFIM with $N=8$ of 
    		(\romannumeral2) under
    		different training states. (A): ground state with 
    		$\lambda=0.6, 1.3$
    		(B): ground state with $\lambda=0.7, 1.4$. The solid line 
    		denotes the
    		disentangling circuit training from $\lambda=0.6, 0.7$; 
    		the dashed line
    		denotes the disentangling circuit training from 
    		$\lambda=1.3, 1.4$.}
    	\label{fig:TFIMl}
    \end{figure*}
    
    \begin{figure*}[h]
    	\centering \subfloat[]{ 
    	\includegraphics[width=0.45\textwidth]{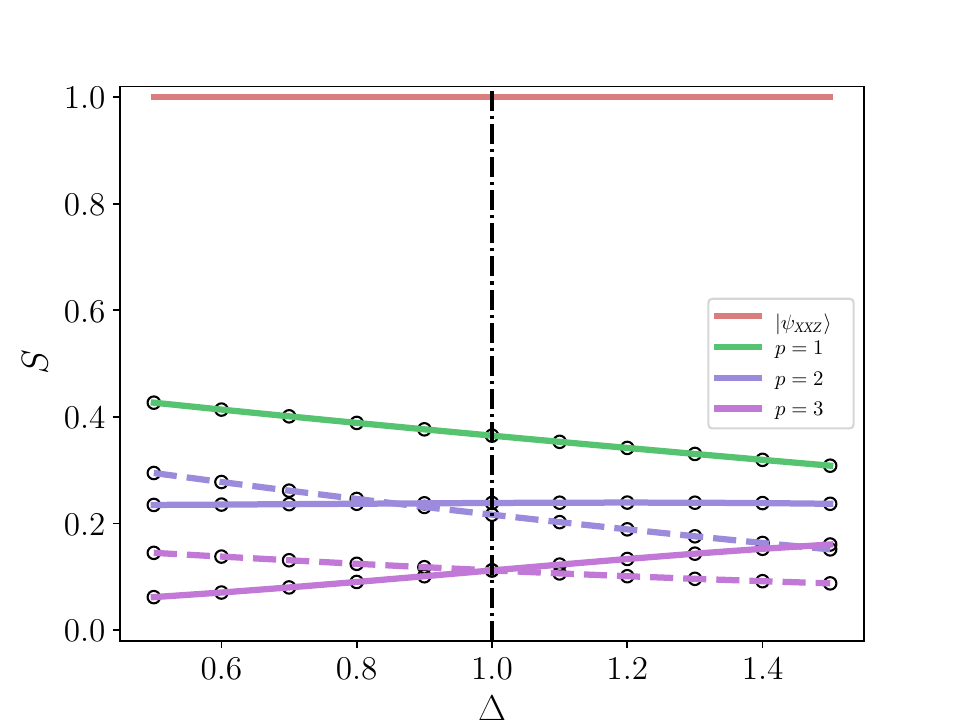}
    	} \subfloat[]{ 
    	\includegraphics[width=0.45\textwidth]{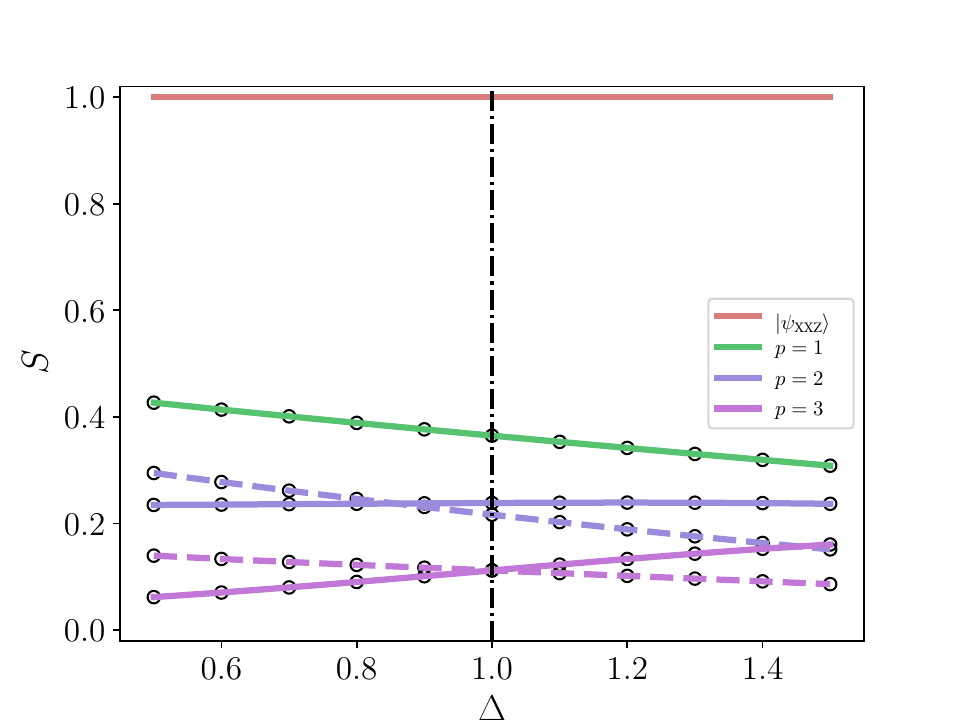} }
    	\caption{Entanglement entropy versus $\Delta$ with the 
    	RL-ansatz
    		disentangling circuit for XXZ model with $N=8$ of 
    		(\romannumeral2) under
    		different training states. (A): ground state with 
    		$\Delta=0.6, 1.3$ (B):
    		ground state with $\Delta=0.7, 1.4$. The solid line 
    		denotes the
    		disentangling circuit training from $\Delta=0.6, 0.7$; 
    		the dashed line
    		denotes the disentangling circuit training from 
    		$\Delta=1.3, 1.4$.}
    	\label{fig:XXZl}
    \end{figure*}
    
    As Fig.~\ref{fig:TFIMn} and Fig.~\ref{fig:TFIMl} show, the 
    algorithm's
    performance with the TFIM is robust under different conditions. 
    The same
    conclusion can be drawn in Fig.~\ref{fig:XXZn} and 
    Fig.~\ref{fig:XXZl} with
    the XXZ model. In other words, we may also look for the phase 
    transition at
    which the training is most independent of chosen conditions.
    
    \section{Concurrence in quantum phase transition}
In previous studies, a partial relationship between entanglement and quantum phase transitions has been found~\cite{Osterloh2002,PhysRevA.66.032110,gu2003entanglement,PhysRevLett.93.250404}.
Here we show the relationship between concurrence and quantum phase transition, which is used to compare with our method. The concurrence~\cite{wootters1998entanglement} between sites $i$ and $j$ is defined as
\begin{equation}
    C(i, j)=\max \left\{r_{1}(i, j)-r_{2}(i, j)-r_{3}(i, j)-r_{4}(i, j), 0\right\},
\end{equation}
where $r_\alpha(i,j)$ are the square roots of the eigenvalues of the
product matrix $\rho(i, j) \tilde{\rho}(i, j)$ in descending order; the spin
flipped matrix is defined as $\tilde{\rho}=\sigma^{y} \bigotimes \sigma^{y} \rho^{*} \sigma^{y} \bigotimes \sigma^{y}$.

Based on previous studies, we know that the next nearest neighbor concurrence $C(2)$ is related to the quantum phase transition of the TFIM model. The relationship is shown in Fig.~\ref{fig:curr} (a). In the XXZ model, the nearest-neighbor concurrence $C(1)$ is related to its quantum phase transition, and the relationship is shown in Fig.~\ref{fig:curr} (b). From Fig.~\ref{fig:curr} (a), it can be seen that as the spin number $N$ of the system increases, the mutation region of concurrence becomes narrower near the critical point. However, this method can only give a phase transition range in a finite size of systems and cannot precisely determine the critical point.  
As can be seen in Fig.~\ref{fig:curr} (b), the maximum value of concurrence is at the critical point of the XXZ model. However, unlike TFIM, it is independent of the system size. We can conclude that there is no general method to observe only the change in the quantity of entanglement to determine the quantum phase transition.
    \begin{figure*}
    \hspace{0.66cm}
    \begin{subfigure}[h]{0.407\textwidth}
        \includegraphics[width=\linewidth]{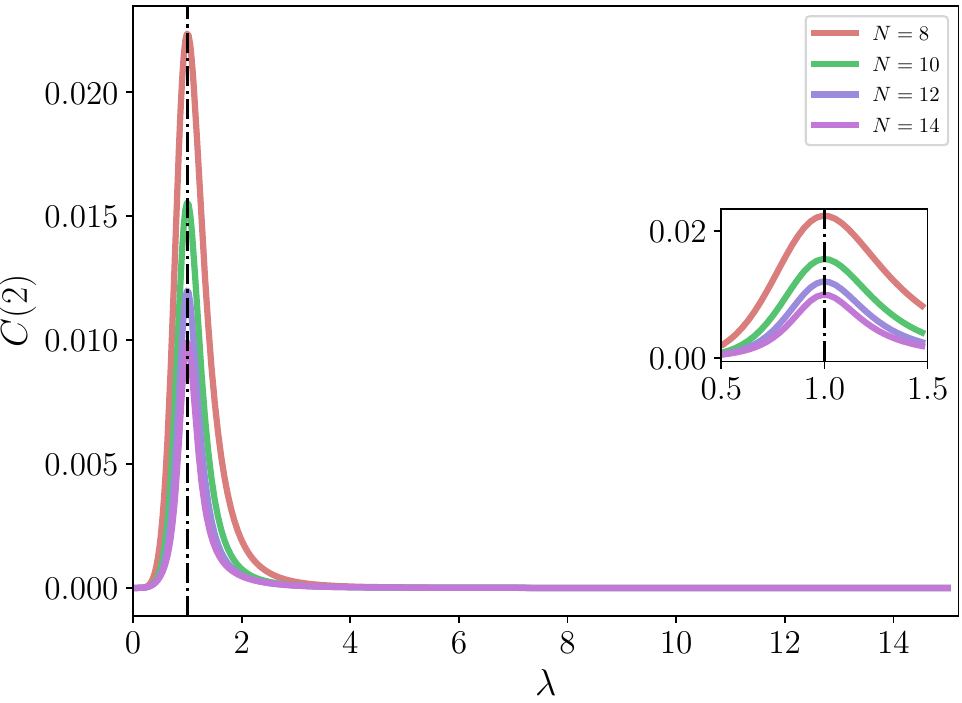}
        \caption{}
    \end{subfigure}
    \hspace{0.84cm}
    \begin{subfigure}[h]{0.4\textwidth}
        \includegraphics[width=\linewidth]{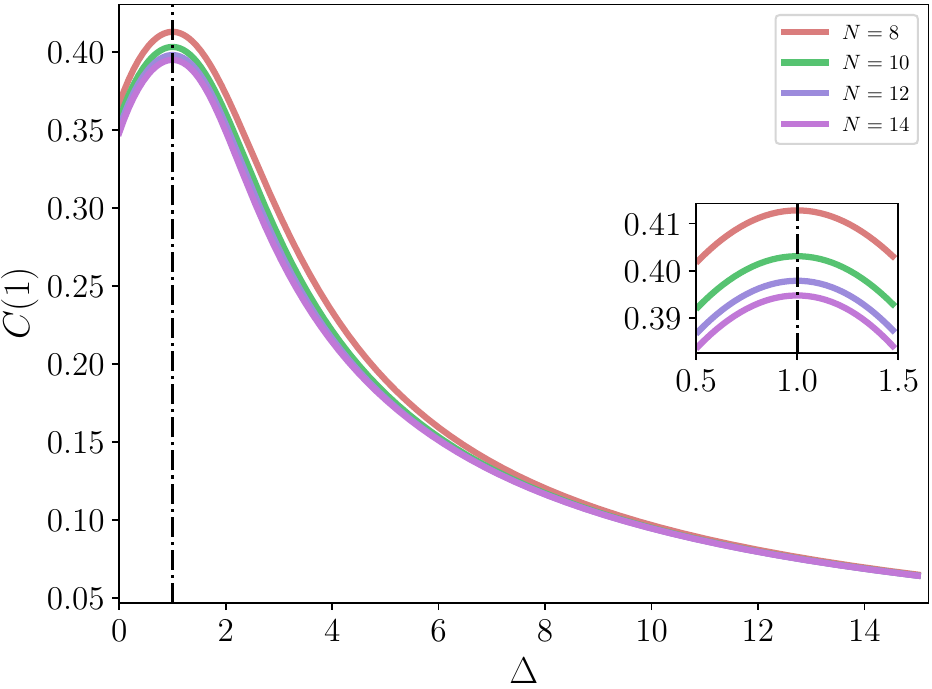}
        \caption{}
    \end{subfigure}
    
    \caption{Concurrence as function of parameters. (a): next nearest neighbor concurrence $C(2)$ in TFIM with $N$ sites (b): nearest-neighbor concurrence $C(1)$ in XXZ model with $N$ sites.}
    \label{fig:curr}
\end{figure*}
 
\section{DMRG in phase identification} 

Fig.~\ref{DMRG} presented illustrates the results obtained from the Density Matrix Renormalization Group (DMRG) method in determining phase transitions in the Transverse Field Ising Model (TFIM) and the XXZ model. We evaluated the performance of two critical physical quantities in the DMRG ground state. The first quantity is the magnetization, which corresponds to the average of the pauli operator observations in the z-direction over all sites. The second quantity is the entanglement entropy, calculated by considering the partition between half of the chain and the remaining half.

	\begin{figure*}[h]
 \hspace{0.65cm}
	\subfloat[]{
		\includegraphics[width=0.404\textwidth]{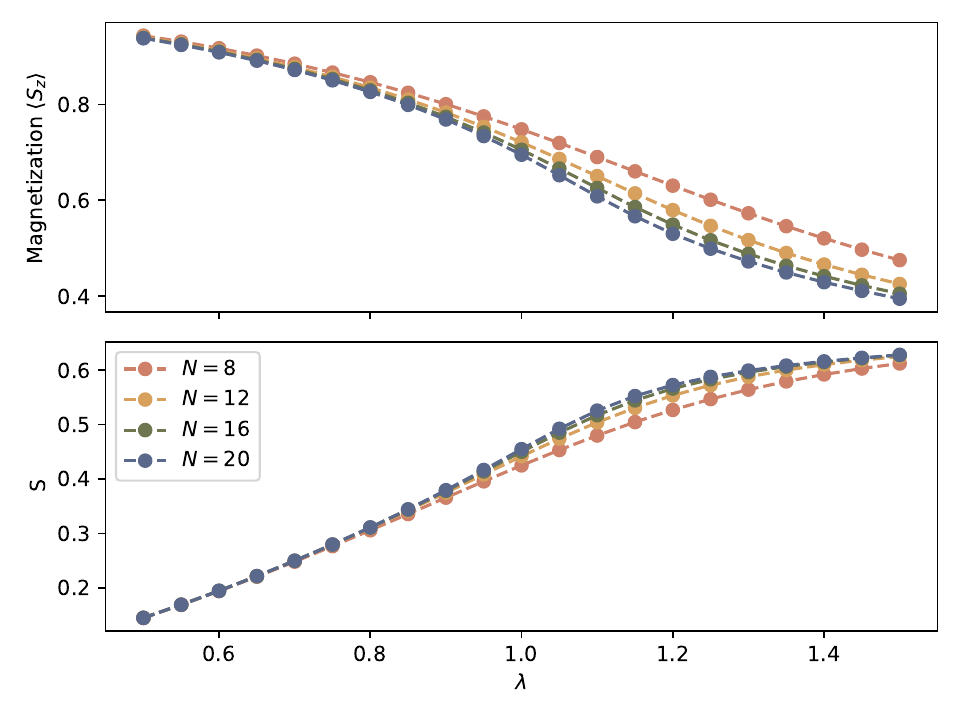} } 
    \hspace{0.45cm}
	\subfloat[]{
		\includegraphics[width=0.42\textwidth]{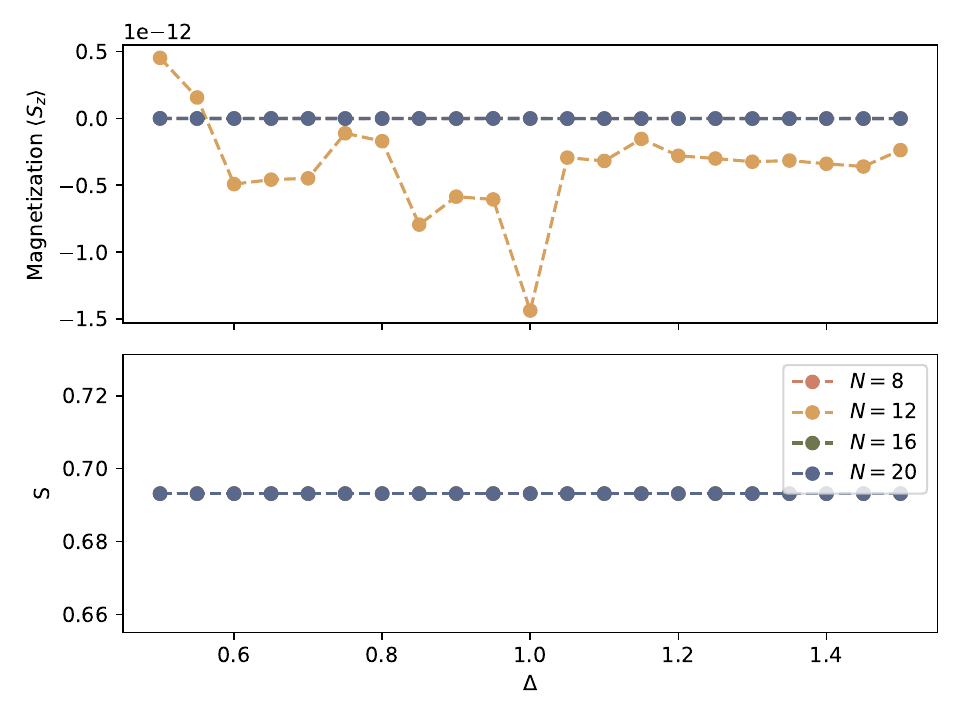} }
	\caption{Ground state magnetization and entanglement entropy for (a) Transverse Field Ising Model (TFIM) and (b) XXZ model with varying system sizes. The top subplots display the magnetization $\langle S_z \rangle$ as a function of the system parameter, while the bottom subplots show the entanglement entropy $S$ as a function of the system parameter.}\label{DMRG}
\end{figure*}

From the analysis of the figure, it becomes apparent that pinpointing the precise phase transition point using the DMRG method is a challenging task. Nevertheless, the primary advantage of employing DMRG lies in its ability to calculate the ground state of larger systems with remarkable accuracy. 

\section{Local Order Parameters and Phase Transition Indicators}

    For the finite 1D transverse-field Ising model (TFIM) presented in our example, the limited system size introduces significant boundary effects. As illustrated in Fig.~\ref{order}(a), magnetization changes smoothly as a function of the transverse field $\lambda$, contrasting with the sharp transitions observed in infinite systems. This smoothness arises from the lack of a true phase transition in finite systems.

Turning to the finite 1D XXZ spin chain, we agree that the antiferromagnetic (AFM) correlation function
$\left\langle S_i^z S_{i+1}^z\right\rangle$ conveys important information about the system's phase. It's crucial to recognize that the transition at $\Delta=1$, as discussed in our manuscript, is of the Kosterlitz-Thouless (KT) type. This infinite order phase transition does not feature a discontinuous change in any local order parameter, which can render the transition's signatures less pronounced in finite-size systems. While Fig.~\ref{order}(b) demonstrates that the AFM correlation function for pinpointing the transition at $\Delta = 1$ remains a challenge.

\begin{figure}[h]
	\centering \subfloat[]{
		\includegraphics[width=0.45\textwidth]{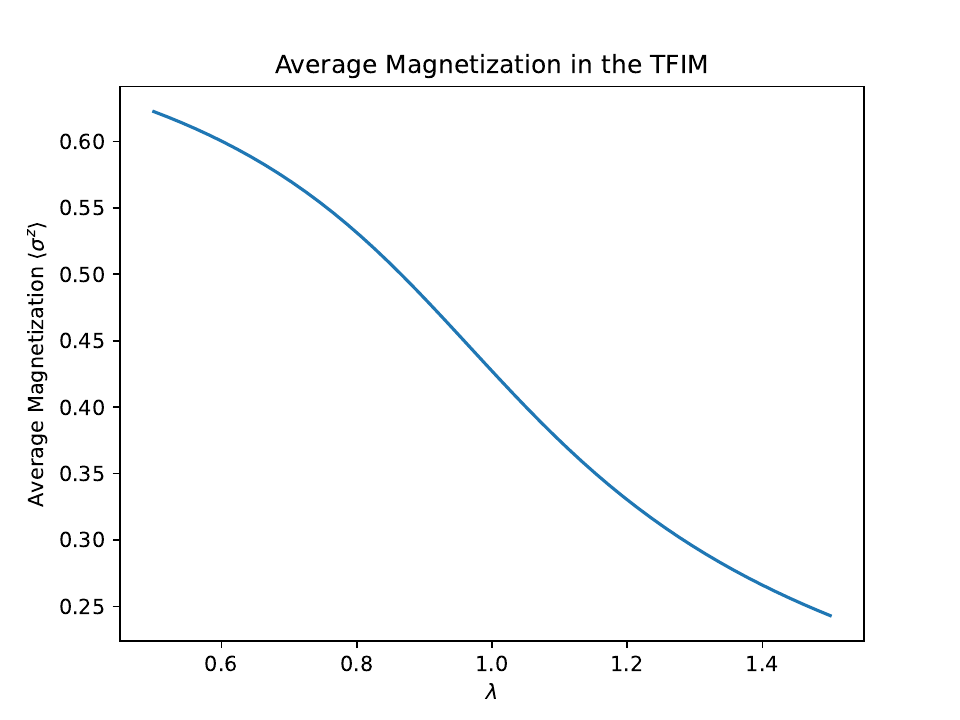} } 
	\subfloat[]{
		\includegraphics[width=0.45\textwidth]{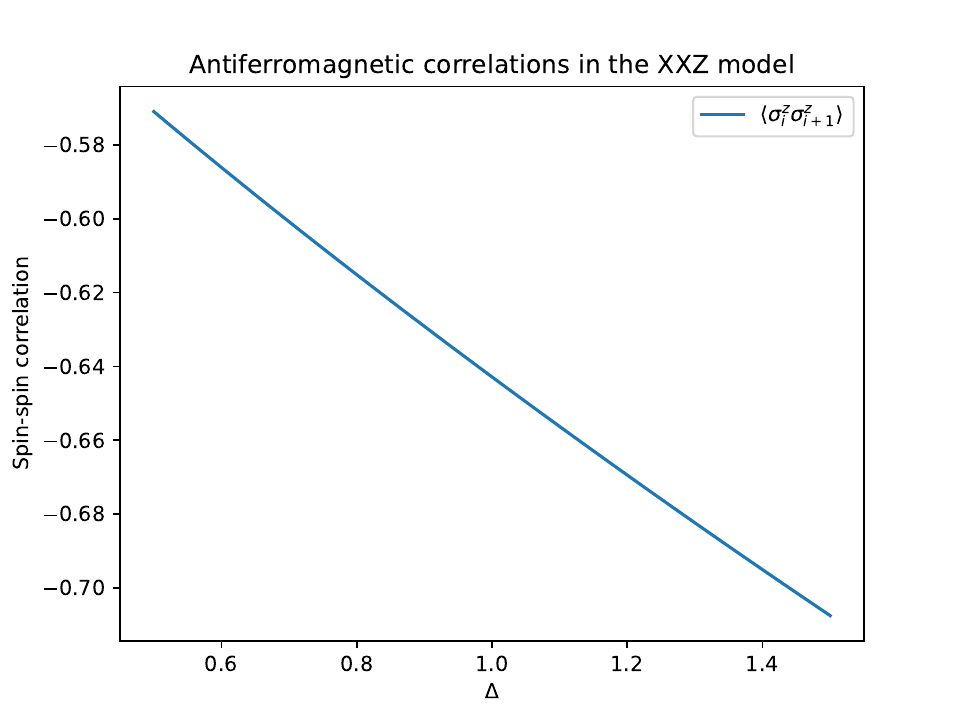} }
	\caption{compare the reduced ground states with 2 qubits via magnetization or correlation function with the models of size $N=8$ in our manuscript.}\label{order}
\end{figure}

The findings discussed above align with the Density Matrix Renormalization Group (DMRG) results we presented in our previous analysis (refer to Fig.~\ref{DMRG}). Although the order parameter in a finite system provides valuable insights into the system's state and its proximity to phase transitions, it alone cannot precisely pinpoint the critical point due to the inherent limitations posed by finite size effects. A more accurate determination of the critical point can be achieved through the application of scaling laws and by examining the behavior of the system across various sizes.

Furthermore, our approach serves as an exemplary case of how our methodology can be differentiated from traditional methods used in identifying quantum phase transitions. This demonstrates that our technique not only complements existing phase transition identification methods but also enhances the robustness and depth of analysis in studying critical phenomena in finite systems.

\section{Ground state overlap measures}
Additionally, we evaluated the ground state overlap for the two models under investigation. As depicted in Fig.~\ref{GSO}, similar to the expectation or correlation methods, this approach does not yield precise information about the phase transition point in the models we analyzed. Concerning alternative methods that employ entanglement to identify phase transitions, such as the concurrence method outlined in the supplementary material, we suggest that relying exclusively on changes in entanglement measures is not a universally dependable indicator of quantum phase transitions. This observation underscores the necessity for a multifaceted approach when analyzing and identifying critical points in quantum systems.

\begin{figure}[h]
	\centering \subfloat[]{
		\includegraphics[width=0.45\textwidth]{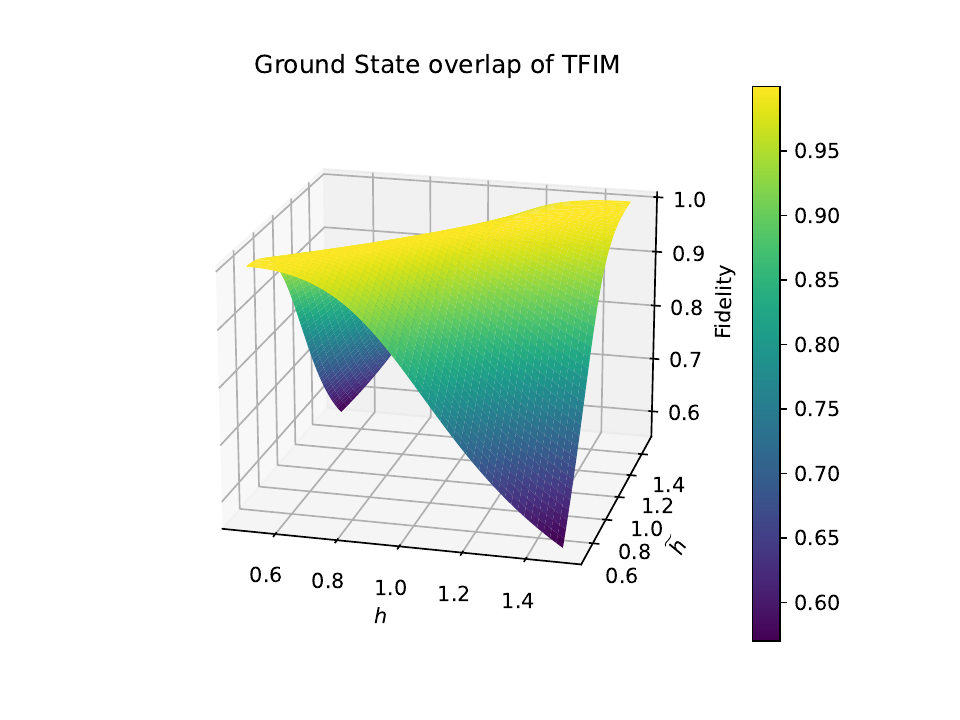} } 
	\subfloat[]{
		\includegraphics[width=0.45\textwidth]{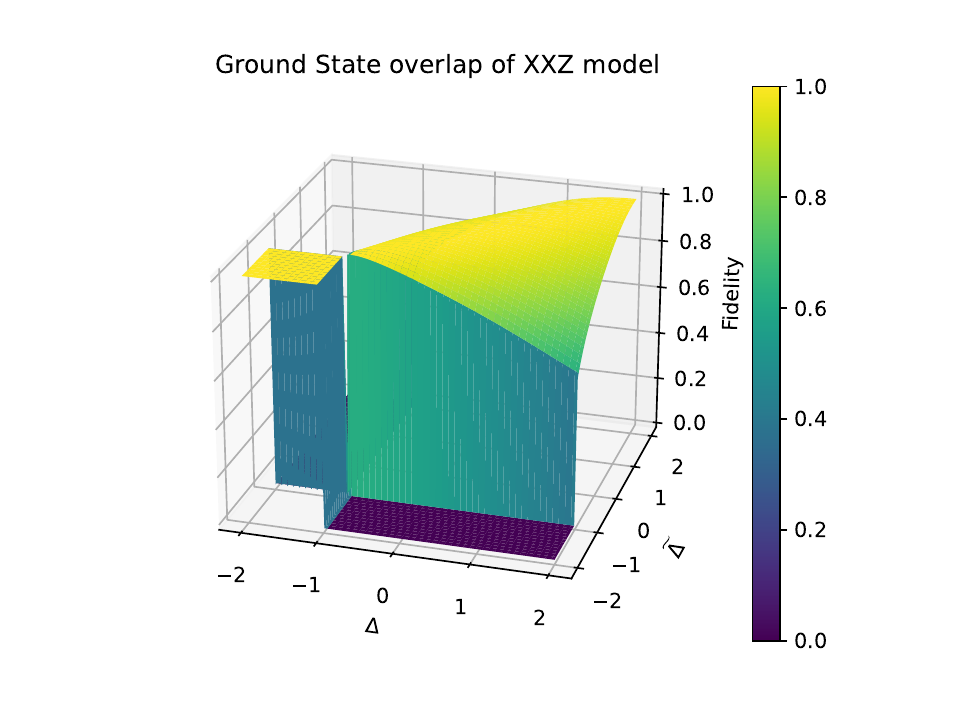} }
	\caption{Compare the ground states via Ground state overlap with the models in our manuscript.}
\end{figure}\label{GSO}

 \section{The convergency of RL-ansatz disentangling circuit }\label{conv}
    
    As Fig.~\ref{fig:TFIMc} and Fig.~\ref{fig:XXZc} shown, the 
    performance of the
    RL-ansatz disentangling circuit is convergence with the increase of circuit layers.
    
    \begin{figure*}[tbhp]
    \centering
            \begin{subfigure}[h]{0.39\textwidth}
            \includegraphics[width=\linewidth]{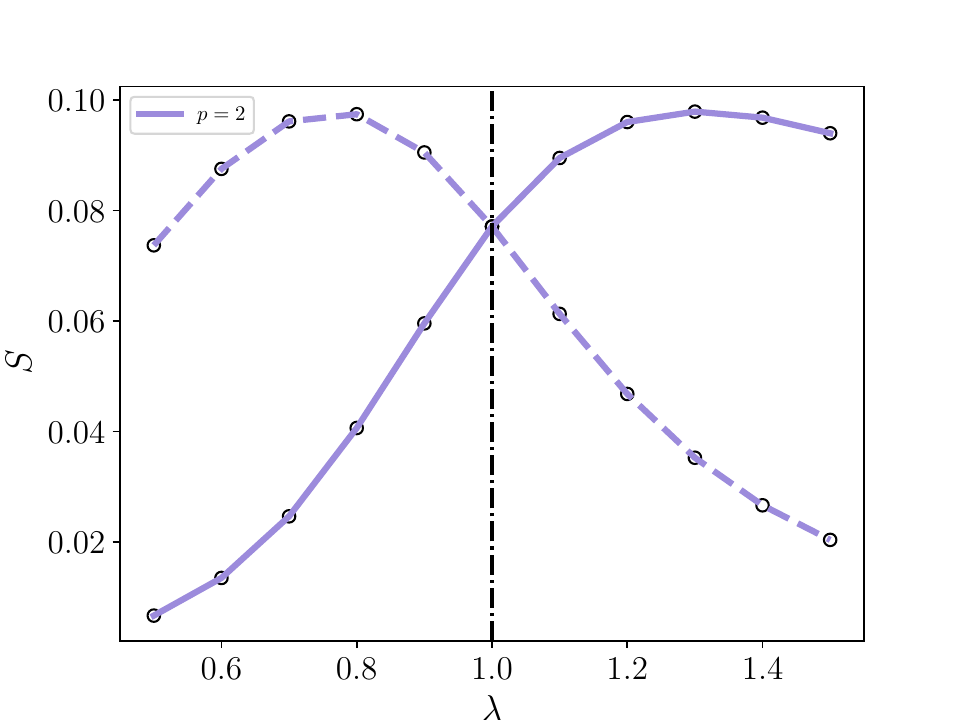}
            \caption{}
        \end{subfigure}
        \hspace{1cm}
        \begin{subfigure}[h]{0.39\textwidth}
            \includegraphics[width=\linewidth]{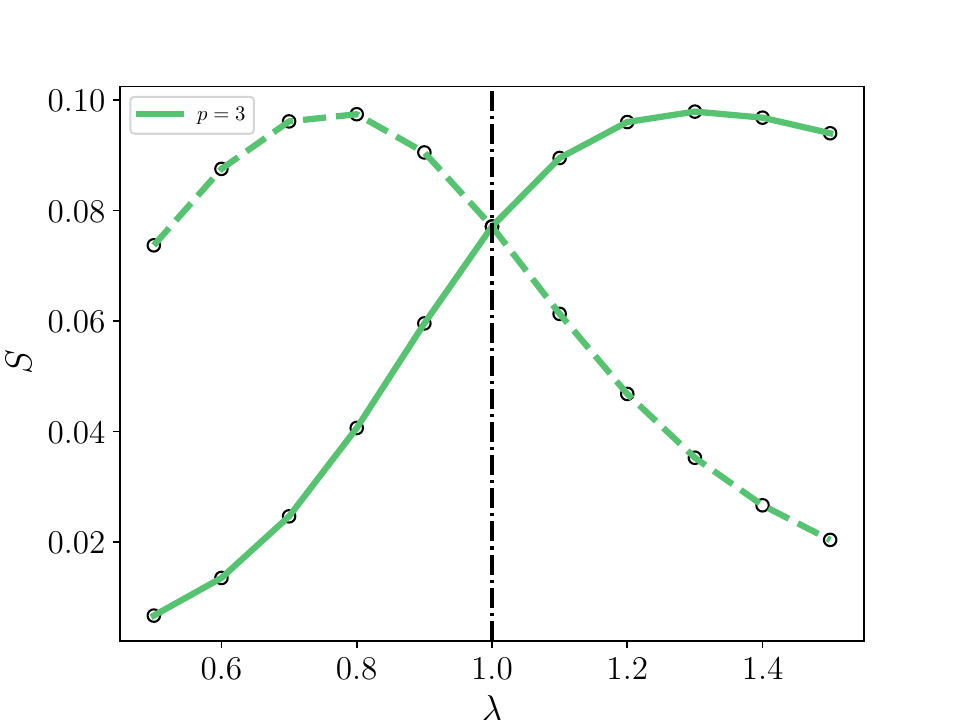}
            \caption{}
        \end{subfigure}
        \caption{Entanglement entropy versus $\lambda$ for 8-site TFIM.
            $P$ is the number of circuit layers. (a): circuit layers of $p=2$ (b): circuit layers
            of $p=3$. The solid line denotes the disentangling circuit training from $\lambda=0.5$;
            the dashed line denotes the disentangling circuit training from $\lambda=1.5$.}
        \label{fig:TFIMc}
    \end{figure*}

    \begin{figure*}[htbp]
    \centering
            \begin{subfigure}[h]{0.39\linewidth}
    \includegraphics[width=\linewidth]{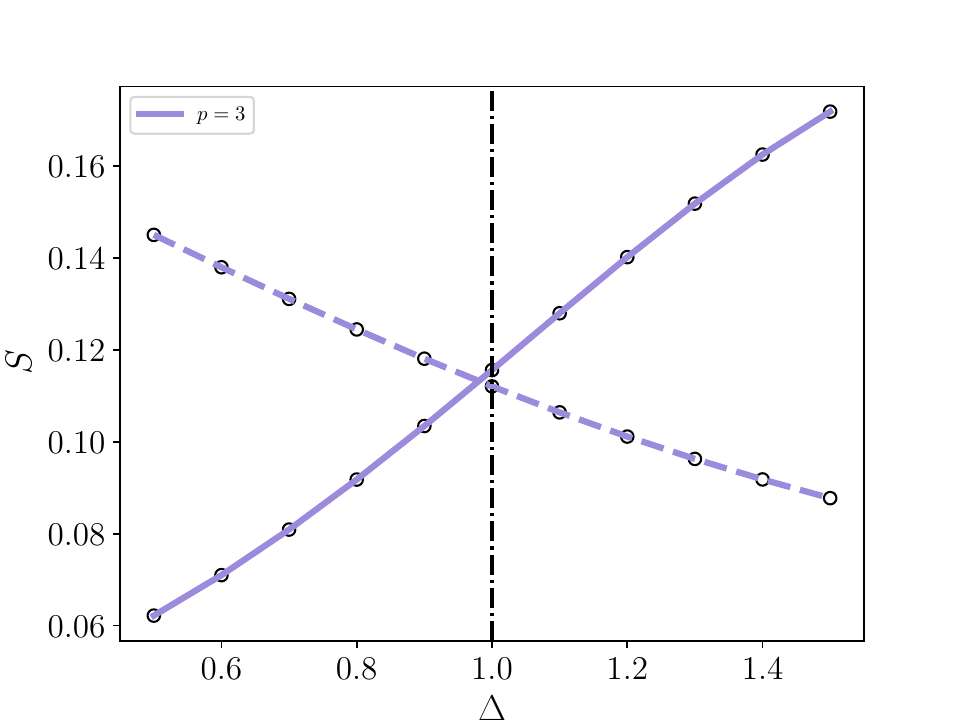}
    \caption{}
\end{subfigure}
\hspace{1cm}
\begin{subfigure}[tbh]{0.39\linewidth}
    \includegraphics[width=\linewidth]{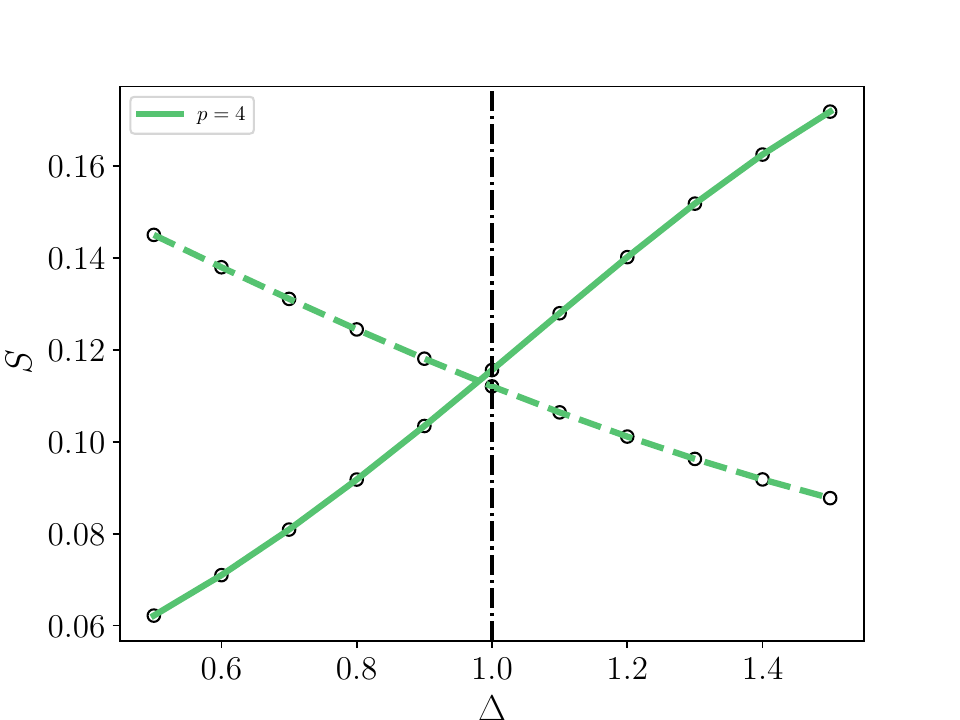}
    \caption{}
\end{subfigure}

        \caption{Entanglement entropy versus $\Delta$ for 8-site XXZ model. $P$ is
            the number of circuit layers (a): circuit layers of $p=3$ (b): circuit layers
            of $p=4$. The solid line denotes the disentangling circuit training from $\Delta=0.5$;
            the dashed line denotes the disentangling circuit training from  $\Delta=1.5$.}
        \label{fig:XXZc}
    \end{figure*}

\begin{figure}[h]
	\centering \subfloat[]{
		\includegraphics[width=0.39\textwidth]{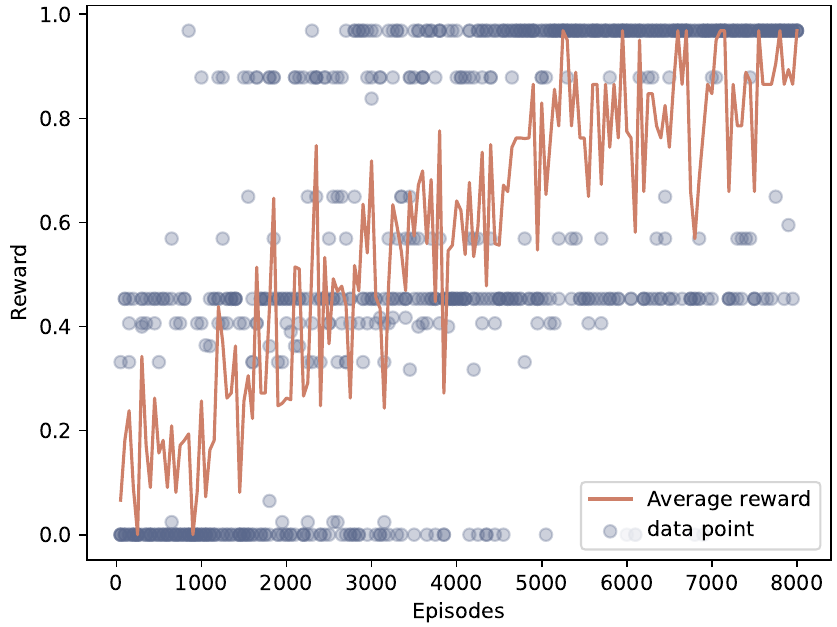} } 
  \hspace{1cm}
	\subfloat[]{
		\includegraphics[width=0.39\textwidth]{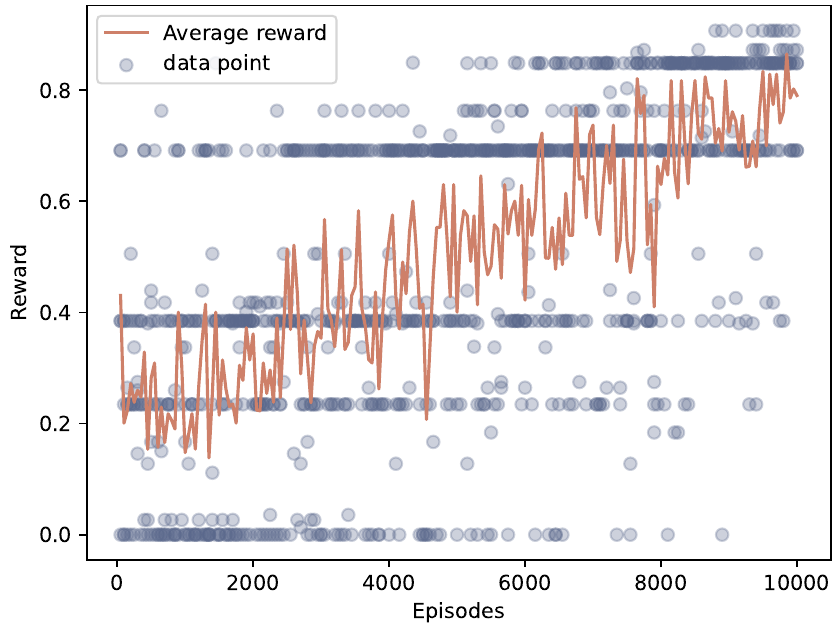} }
	\caption{Rewards for a reinforcement learning (RL) agent over a series of episodes in five different runs, applied to (a) Transverse Field Ising Model (TFIM) with 8 sites and a quantum circuit layer $p=2$, and (b) XXZ model with 8 sites and a quantum circuit layer $p=3$. Individual data points are depicted as semi-transparent blue dots, while the average reward curve is represented by a solid red line.\label{rewards}}
\end{figure}

Fig.~\ref{rewards} presents the rewards monitored over episodes for 5 agents applied to the TFIM and XXZ models. We observe that the average reward across the 5 agents increases, indicating that our algorithm successfully learns to design disentanglement quantum circuits. However, it is crucial to note that the convergence of a DQN is not solely determined by the rewards. Due to the off-line learning and epsilon-greedy exploration properties in DQN, rewards are influenced by both the current policy and the experiences stored in the memory buffer. During the learning process, a random action selection weight typically starts with a high value and gradually decays over time. This enables the agent to explore the environment more during the initial stages and rely more on its learned Q-values as it gains experience. Consequently, the rewards might not exhibit smooth behavior.

    \section{Hyper-parameters in RL algorithm}
    
    Our RL agent makes use of a deep neural network to approximate the Q values for
    the possible actions of each state. The network consists of 6 layers of the
    neural network. All layers have ReLU activation functions except the output
    layer which has linear activation. The hyper-parameters of the network are
    summarized in Table~\ref{para}.
    \begin{table}[H]
        \centering
        \caption{Training Hyper-Parameters}
        \begin{tabular}{cc}
            \textrm{Hyper-parameter} &\textrm{Values} \\
            \hline
            Neurons in neural network & $\{512,512,512,512,512,512\}$\\

            Minibatch size & 120 \\
            
            Replay memory size & 100000\\
            
            Learning rate & $0.0001$\\
            
            Update period & 100\\
            
            Reward decay $\gamma$ & 0.99\\

            \label{para}
            
        \end{tabular}
    \end{table}

\end{document}